\newcommand{\vValue}{\psi}
\newcommand{\indicator}[1]{\mathds{1}_{\left[{#1}\right]}}
\DeclareMathOperator{\Unif}{Unif}
\title[Explicit shading strategies for repeated truthful auctions]{Explicit shading strategies for repeated truthful auctions}
\author{Marc Abeille}
\affiliation{Criteo Research, m.abeille@criteo.com}
\author{Clement Calauzenes}
\affiliation{Criteo Research, c.calauzenes@criteo.com}
\author{Noureddine El Karoui}
\affiliation{Criteo Research and UC, Berkeley, n.elkaroui@criteo.com, nkaroui@berkeley.edu}
\author{Thomas Nedelec}
\affiliation{Criteo Research, ENS Paris Saclay, t.nedelec@criteo.com}
\author{Vianney Perchet}
\affiliation{ENS Paris Saclay, Criteo Research, perchet@normalesup.org}
\date{March 2018}
\begin{document}
\begin{abstract}
With the increasing use of auctions in online advertising, there has been a large effort to study seller revenue maximization, following Myerson's seminal work, both theoretically and practically.
We take the point of view of the buyer in classical auctions and ask the question of whether she has an incentive to shade her bid even in auctions that are reputed to be truthful, when  aware of the revenue optimization mechanism.

We show that in auctions such as the Myerson auction or a VCG with reserve price set as the monopoly price, the buyer who is aware of this information has indeed an incentive to shade. 
Intuitively, by selecting the revenue maximizing auction, the seller introduces a dependency on the buyers' distributions in the choice of the auction.
We study in depth the case of the Myerson auction and show that a symmetric equilibrium exists in which buyers shade non-linearly what would be their first price bid. They then end up with an expected payoff that is equal to what they would get in a first price auction with no reserve price. 

We conclude that a return to simple first price auctions with no reserve price or at least non-dynamic anonymous ones is desirable from the point of view of both buyers, sellers and increasing transparency. 

\end{abstract}

\maketitle

\section{Introduction}
\label{sec:introduction}
Billions of auctions are run worldwide everyday. One of the main supplier of such auctions is the online advertising market \cite{AllouahBesbes2017,Balseiro2015MultiStage,AmiKeaKey12}. Ad slots are sold to  advertisers by a publisher, typically a web site, following  more or less explicit mechanisms, i.e., a type of auctions with specific rules. Those auctions take place on platforms known as ``ad exchanges'' \cite{Mut09}.

Maybe the most common types of auctions used in this setting are the classical second price auctions with or without reserve prices as they are reputed to be truthful (it is dominant to bid the true valuation) and even optimal for identical bidders \cite{Myerson81,RilSam81}. Auction theory has been developed for several decades (and several Nobel prices were granted for those breakthroughs). As a consequence, optimal strategies of the bidders (when they exist) and revenue maximizing mechanisms are almost perfectly understood.

However, one of the crucial working assumption is that the seller must know the distribution of valuations of the bidders to tune optimally her mechanism (for instance by setting the right reserve price). In practice though, this assumption is obviously not satisfied \cite{Wil87}. On the other hand, in most real life applications, such as online advertising, auctions between a single publisher and the same advertisers are not run only once, but  several hundred thousand times each day. As a consequence, the seller has access to an incredibly large amount of bids from each bidder for more or less equivalent goods. So one trend of research is to learn the optimal mechanisms from the past sequence of bids, assuming that they truthfully represent the valuations of the bidders \cite{CesGenMan13,FuHarHoy13,OstSch11,ColRou14,AmiCumDwo15,KanNaz14,DhaRouYan15,BluManMor15,MohMed14,AmiRosSye14,MorgenRough2016,ChaHarKle07,DasSyr16}.

The motivation behind this traditional assumption is that the seller is only choosing incentive compatible auctions such as Vickrey auctions. Therefore, since in a one shot second price auctions it is optimal to bid one's own valuations, the seller can safely expect to observe the past valuations of the buyer, and hence an approximation of her distribution of valuations. Even if computing the optimal auction of Myerson might be complicated, as it needs an almost  number of samples, there exist approximatively optimal auctions \cite{HarRou09,FuHarHoy13,Hartline2009} requiring a much more reasonable amount of data.

\medskip

Our starting point is the following claim. If the seller uses data acquired on past auctions to update the  mechanism (say, to fix dynamically the reserve price) then the repeated mechanism might no longer be truthful. Intuitively, this is rather clear. Assume that the reserve price in the second price auction of a seller is determined by her past bids. By bidding untruthfully (even in Vickrey auctions!), the bidder might lower drastically her reserve price, at a small cost of lost auctions. As a result, her bids might actually clear the new reserve price more often than with truthful bids.

We actually prove this intuition that several mechanisms known to be truthful in the one shot case are no longer truthful in the repeated setting. Assuming symmetrical bidders and a seller that myopically designs a Myerson auction with respect to the past sequence of bids (in effect assuming they were equal to the valuations), we prove that a symmetric equilibrium enforce the same outcome than second price auctions without reserve prices !

Actually, our results do not rely on the finite sample setting that has received a lot of attention recently (both from the seller, but also on the bidder point of view \cite{McA11,Weed16}). We will directly assume that a strategy of a bidder is a modification (or a function) of her true valuation, that the seller observes the distribution of bids, from which she computes reserve prices. The payoff of a strategy (a.k.a a modifying or shading function) will therefore be computed as the expected gain of the auctions run with all those parameters. Although interesting, putting an extra layer of learning over datasets of finite sample size is not necessarily to convey our main messages.

The paper is organized as follows. We introduce the model in Section 2. In Section 3, we prove the claim that repeated ``one-shot truthful’’ auctions (for several widely used types of auction) are no longer truthful by exhibiting a simple strategy that dominates truthful bidding. We investigate in detail the Myerson auction in Section 4, where we prove that a symmetric equilibrium enforces the same outcome as second price auctions without reserve prices. The Appendix contains more technical results, reminders and proofs of some lemmas. 

\section{Setting}
\label{sec:setting}

We consider $K$ independent bidders with distributions of valuation $F \in \cF$ participating to repeated independent auctions:  the values $X_i$ of bidder $i$ in the different auctions are i.i.d., drawn from $F_i$. We consider only stationary strategies, i.e., the bidder $i$ applies a function $\beta_i: \lR \mapsto \lR$ mapping her value to a bid, and we denote by $\cB \subseteq \{\lR\mapsto\lR\}^K$ the set of strategy profiles. We will assume the $\beta_i$ to be increasing -- i.e. the higher the value, the higher the bid. We denote $H=(H_1 \dots H_K) \in \cH(\cF, \cB)$ the resulting distributions of bids -- i.e. $H_i$ is the distribution of $B_i = \beta_i(X_i)$ where $X_i$ is drawn from $F_i$. Especially, as the $\beta_i$ are increasing, we can have
$$
H_i(b) = F_i(\beta_i^{-1}(b)) ~~~~\text{and}~~~~ h_i(b) = \frac{f_i(\beta_i^{-1}(b))}{\beta_i'(\beta_i^{-1}(b))}
$$

Another quantity of interest in auction studies, only depending on the distribution of a r.v. $X$ with distribution $F$, is the virtual value of a bidder $\vValue_X(x) = x - \frac{1 - F(x)}{f(x)}$ and the related hazard rate $\lambda_X(x) = \frac{f(x)}{1 - F(x)}$. We can notice we have the following link between the virtual values of $X$ and $B$:
\begin{align}
   \vValue_B(\beta(x)) - \beta(x) = \beta'(x)(\vValue_X(x) - x)
   ~~~\text{and}~~~
   \lambda_B(\beta(x)) = \frac{\lambda_X(x)}{\beta'(x)}
   \label{eq:edp_beta_virtual_values}
\end{align}
Hence, up to solving the ODE~\eqref{eq:edp_beta_virtual_values}, we can manipulate equivalently the bidder's strategy $\beta$, the virtual value $\vValue_B \circ \beta$ at value $x$ exposed to the seller or the corresponding hazard rate $\lambda_B \circ \beta$.

\subsection{Revenue Maximizing Auctions}
With the widespread use of auctions in online advertising markets, there has been an extensive work about characterizing and estimating revenue-maximizing auctions in different settings. 
More recently, some focus has been put on practical estimation of such auctions and derivation of approximations simpler to optimize, such as boosted second-price auctions \cite{Golrezaei2017} or second-price with monopoly reserve price \cite{Roughgarden2016}.

More formally, $\Delta$ being the space of probability distributions over the K bidders, the seller chooses a class of auctions $\cA = \{(\lR_+^K, \bQ, \bM)\}$ consisting in a pair of functions $\bQ : \lR_+^K \to \Delta$ (allocation rule) and $\bM : \lR_+^K \to \lR_+^K$ (payment rule). Here, $Q_i(b)$ is the probability for bidder $i$ to win when bidders submit $b$ and $M_i(b)$ the corresponding expected payment. We denote 
$$
q_i(b_i) = \lE_{H_{-i}}(Q_i(b_i, B_{-i})) ~~~\text{and}~~~ m_i(b_i) = \lE_{H_{-i}}(M_i(b_i, B_{-i}))
$$
As she only observes  bid distributions $H\in\cH(\cF, \cB)$, the seller picks the revenue maximizing auction,
\begin{align*}
    a^*_H = ((\lR_+^K, \bQ_H^*, \bM_H^*)) = \argmax_{a \in \cA} \sum_{i=1}^K \lE_{H_i}\left(m_i^{(a)}(B)\right)
\end{align*}
Under incentive compatibility and individual rationality, the expected payment of bidder $i$ is
$$
\lE_{H_i}\left(m_i(B_i)\right) = m_i(0)+\lE_H\left(\vValue_{B_i}(B_i)Q_i(B)\right),
$$
see, e.g.,  \cite{Myerson81}. The seller does not observe $F$ but only $H$, and we will always assume that she is not strategic: she optimizes her revenue as if she was observing the valuations, i.e., 
\begin{align*}
    a^*_H = \argmax_{a \in \cA} \lE_{H}\sum_{i=1}^K Q_i^{(a)}(B)\vValue_{B_i}(B_i)
\end{align*}

\subsection{Examples of mechanisms, allocations and payment rules.}
We introduce in this section some of the widely-used (and truthful) auctions mechanisms that we will consider.

\textbf{Vickrey-Clarke-Groves (VCG) mechanisms} are second price auctions with non-anonymous reserve prices, i.e., different reserve prices to different bidder. Then $\cA$ is the set of all possible reserve prices. Different allocation rule exist: an item can be allocated to the highest bidder amongst all those that have cleared their reserve price, or to no one if no reserve price is cleared.  We will call it the \textsl{eager} version of Vickrey-Clarke-Groves (VCG) mechanism.

Another allocation rule dictates to allocate the item to the higher bidder, if she has cleared her reserve price, and to no one otherwise. This version of the VCG mechanism will be called \textsl{lazy} (with anonymous reserve prices, eager and lazy versions coincide).

Computations of optimal non-anonymous reserve prices suffer from being NP-hard \cite{FieldGuideToPersonalizedReservePrices2016} and even APX-hard\cite{Roughgarden2016}. However, \cite{Roughgarden2016} also proved that using the monopoly price $\vValue_i^{-1}(0)$ as reserve price in a VCG auction leads to a 2-approximation of the Myerson auction. It also led \citep{FieldGuideToPersonalizedReservePrices2016} to prove that the \emph{lazy} version of VCG admit the monopoly price as optimal reserve price and is a 2-approximation of the VCG with optimal reserve prices.

\medskip

\textbf{Myerson type of auctions} allocate the item to the highest non-negative (assuming there is one) virtual value of the bids $\vValue_{B_i}(b_i)$ and to no one otherwise. The payment of the winner is the smallest winning bid, i.e., $\vValue_{B_i}^{-1}\Big(\max( 0,\{\vValue_{B_i}(b_j)\}) \Big)$. The Myerson auction maximizes the expected seller revenue at the equilibrium, at least if virtual values are increasing. Moreover, in the symmetric case where the distribution of valuations $F_i$ is the same for all bidder, then Myerson auction coincides with the VCG auction with reserve price $\vValue_{B_i}^{-1}(0)$.

A Myerson-type of auctions would follow the same rules except that any non-linear, increasing transformation $\widetilde{\vValue}$ can be used instead of the actual virtual value. In that case, $\cA$  can be either the set of such transformations, or the set of transformations and non-anonymous reserve prices.

\medskip

\textbf{Boosted second price auctions.} In boosted second price auctions \cite{Golrezaei2017}, the seller chooses  parameters $\{(s_j, r_j)\}_{j=1}^n$ and boosts the bid $b_i$ of bidder $i$ by $s_i$ while keeping a reserve price $r_i$. As a consequence, a bid is transformed into a virtual one through the formula $w_i= s_i(b_i-r_i)$. Then $\cA$ is the set of affine transformations of the bids, or subsets thereof. 

Boosted second price auctions actually correspond to Myerson-type auctions if the distributions of valuations belong to the family of Generalized Pareto (GP) distributions. We refer to Section \ref{sec:genParetoAppendix} in the Appendix for more details on the class of GP distributions. Approximating the actual distribution of valuations by a GP, and computing the optimal mechanism w.r.t.\ theses approximations can actually be simpler than  the optimal Myerson auction \cite{Golrezaei2017}.

\medskip

\textbf{Other mechanisms: posted price, first price, etc.} There exist many other auction mechanisms that are more or less equivalent to one of the former under specific assumptions. For instance, in posted price auctions, the seller fixes a price and the item is sold to one bidder (or all of them) whose valuation is higher than the price. Notice that posting price is equivalent to VCG if only one buyer participates in the auctions.

Computing and learning the optimal posted price can also be done in an online fashion and/or using finite number of samples \cite{KleLei03,BluKumRud03,BubDev17}

\medskip

In first price auctions, the item is sold to the highest bidder (or potentially the bidder with the highest virtual bids, or the highest bidder above his reserve price) but with the simple payment rule: the highest bidder simply pays his winning (virtual) bids. Revenue equivalence principle \cite{krishna2009auction} states that these auctions are, at the equilibrium, more or less equivalent, but the reputed truthfulness (hence simple ``optimal strategies'') of second-price auction make these type of auctions less common, for the moment at least \cite{first_price_move}.

\subsection{Strategic Buyer Problem under Seller Revenue Maximization}
In the literature, the study of revenue maximization is done under the  assumption that buyers bid truthfully even if they are aware of the revenue maximization mechanism. This can be motivated in certain settings where  the seller faces repeated auctions, but buyers change from auction to auction (as on EBay).
When the buyers are interacting  repeatedly with the same seller and if they know the mechanism, such as dynamic reserve prices, it is quite easy to exhibit examples proving that incentive compatibility is lost. Yet, state-of-art comes short of providing the bidders with a better strategy than being truthful.
This lack of understanding of the buyers strategy in presence of dynamic reserve prices is arguably one of the major factor for the shift of the market towards first price auctions \cite{first_price_move}.

We are only considering stationary strategies and assuming that auctions are infinitely repeated and undiscounted, hence the pay-off of bidder $i$, whose valuation is $X_i \sim F$ and bids $B_i = \beta_i(X_i) \sim H_i$, is defined by
\begin{align}
\Pi_i(\beta_i) = \lE\left((X_i-\vValue_{B_i}(B_i))Q^{(a_H^*)}_i(B)\right)\; ,
\label{eq:bidder_payoff}
\end{align}
where $B \sim H$ is the vector of bids send by the bidders.

Then, supposing the seller is choosing the revenue maximization auctions within a given class of truthful auctions (e.g. setting different prices, various boosted second price auctions etc...), the question is whether it's still in the interest of the bidders to remain truthful.
In light of the seller's optimizing behavior (and because we consider infinitely undiscounted repeated auctions), player $i$ faces the following optimization problem. Given the distribution of valuations $X \sim F$ and the bidding strategies of other bidders $B_{j} = \beta_j(X_j) \sim H_j$, solve
\begin{equation}\label{eq:bidder_optim}
\max_{\beta_i}\Pi_i(\beta_i)\;,\qquad \text{subject to }\ a^*_H = \argmax_{a \in \cA} \lE_{H}\left(\sum_{i=1}^K Q_i^{(a)}(B)\vValue_{B_i}(B_i)\right)\;.
\end{equation}

Depending on the auction class $\cA$, the bidders may not be able to derive optimal strategies from \eqref{eq:bidder_optim} and it is not  clear whether they always have  incentive to bid untruthfully. Hence, we propose in Section \ref{sec:linear_shading} to study if several auctions considered in revenue optimization are robust to a simple shading scheme in the context of a bidder optimizing \eqref{eq:bidder_optim}. Then, in Section \ref{sec:myerson}, we compute the equilibrium strategy of  \eqref{eq:bidder_optim} for Myerson auctions with symmetric bidders.

\section{Loss of Incentive Compatibility.}
\label{sec:linear_shading}
To get a sense of whether revenue optimization completely breaks the truthfulness, we first study whether several common auctions used for revenue optimization are robust to a very simple modification of strategy: a linear shading of the truthful bid $B_i = \alpha_i X_i$ -- i.e. $\cB = \{x \mapsto \alpha x: \alpha\in[0,1]^K\}$. Then, the relationship between virtual values  and hazard rate simplifies,
\begin{align*}
    \vValue_B(\alpha x) = \alpha \vValue_X(x)~~~~~~~~~\text{and}~~~~~~~~~\lambda_B(\alpha x) = \frac{1}{\alpha} \lambda_X(x)
\end{align*}
We are going to consider the simple setting where only bidder $i$ is being strategic. The assumption of the other bidders being truthful is not a strong restriction as we do not consider the bidders to be symmetric, hence a bidder $j$ bidding truthfully under a distribution of value $F_j$ can also be seen as a bidder following a strategy $\beta_j$ under a distribution of value $F_j \circ \beta_j$.

A mild technical point related to the issue of support might need to be raised: to the left of the support of $X$ we can define the hazard rate as $0$ and hence $\vValue$ is equal to $-\infty$ there. In particular, if $B$ and $X$ don't have the same support. 

\subsection{Myerson Auction}
For simplicity we write $\vValue_{X_i}(x)=\vValue_i(x)$. In this case we know exactly the strategy of the seller, so we can directly re-write the payment of $i$.
$$
Q^{(a^*_H)}_i(B) = \indicator{\alpha \vValue_i(X_i)\geq \max_{j\neq i}(0,\vValue_j(X_j))}
$$
The expected payoff under linear shading assuming that the other bidders fixed their strategy is 
$$
\Pi_i(\alpha)=\Exp{(X_i-\alpha \vValue_i(X_i))\indicator{\alpha \vValue_i(X_i)\geq \max_{j\neq i}(0,\vValue_j(X_j))}}\;.
$$

The following lemma states that bidder $i$ has  incentives to deviate from truthful bidding -- i.e. to choose $\alpha \neq 1$. To support this claim, we just need to prove that $\frac{\partial \Pi_i(\alpha)}{\partial \alpha}\neq 0$ at $\alpha=1$. 
\begin{lemma}\label{lemma:linearShadingInMyerson}
Suppose that $\vValue_i$ is differentiable and that $\vValue_i(x)\leq \vValue'_i(x) x$ on $[\vValue_i^{-1}(0),\infty)$. Then 
$$
\left.\frac{\partial \Pi_i(\alpha)}{\partial \alpha}\right|_{\alpha=1}<0 \text{ at } \alpha=1\;.
$$
In other words, bidder $i$ has an incentive to shade his/her bid in this case. 
\end{lemma}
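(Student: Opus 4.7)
The plan is to compute $\partial_\alpha \Pi_i(\alpha)|_{\alpha=1}$ directly, reduce it via integration by parts to a single integral weighted by $G(\vValue_i(x))$, and sign the remaining integrand using the hypothesis $\vValue_i(x) \le x\,\vValue_i'(x)$.

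First I would condition on $X_i$ and introduce $W := \max_{j\neq i}(0,\vValue_j(X_j))$ with cdf $G$. Since $W\ge 0$, the indicator $\indicator{\alpha\vValue_i(X_i)\ge W}$ is zero on $\{X_i<c\}$ with $c:=\vValue_i^{-1}(0)$, so
\[
\Pi_i(\alpha) \;=\; \int_c^\infty \bigl(x-\alpha\vValue_i(x)\bigr)\,G\bigl(\alpha\vValue_i(x)\bigr)\,f_i(x)\,dx.
\]
Differentiating under the integral (by dominated convergence in a neighborhood of $\alpha=1$), evaluating at $\alpha=1$, and using the virtual value identity $(x-\vValue_i(x))f_i(x)=1-F_i(x)$ yields
\[
\Pi_i'(1) \;=\; \int_c^\infty \vValue_i(x)\,\bigl[(1-F_i(x))\,g(\vValue_i(x)) \;-\; G(\vValue_i(x))\,f_i(x)\bigr]\,dx.
\]

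Next I would integrate the first summand by parts, exploiting $g(\vValue_i(x))\,dx = dG(\vValue_i(x))/\vValue_i'(x)$. Taking $u(x)=\vValue_i(x)(1-F_i(x))/\vValue_i'(x)$ and $dv = dG(\vValue_i(x))$, the boundary terms vanish at $x=c$ because $\vValue_i(c)=0$, and at infinity under the standard tail condition $(1-F_i(x))\vValue_i(x)\to 0$. Combining with the second summand, $\Pi_i'(1)$ collapses into the form
\[
\Pi_i'(1) \;=\; -\int_c^\infty G(\vValue_i(x))\,\Phi(x)\,dx,
\]
for an explicit $\Phi(x)$ built from $\vValue_i,\vValue_i',f_i$, and $1-F_i$.

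The final step is to show $\Phi(x)\ge 0$ on $(c,\infty)$, with strict positivity on a set of positive measure. Substituting $\vValue_i(x)f_i(x) = xf_i(x)-(1-F_i(x))$ reorganizes $\Phi$ into a combination whose sign is controlled by the quantity $x\vValue_i'(x)-\vValue_i(x)$, which is non-negative by hypothesis. Since $G(\vValue_i(x))>0$ for $x>c$, this gives $\Pi_i'(1)<0$. The main obstacle is precisely this last algebraic step: the integration by parts has to be performed in exactly the form above so that, once the virtual value identity is applied, the residual integrand factors through $x\vValue_i'(x)-\vValue_i(x)$, making the hypothesis directly exploitable to sign the integral.
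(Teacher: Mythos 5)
Your setup and the reduction of $\left.\partial_\alpha\Pi_i\right|_{\alpha=1}$ to
$\int_c^\infty \vValue_i(x)\bigl[(1-F_i(x))\,g(\vValue_i(x))-G(\vValue_i(x))f_i(x)\bigr]dx$ with $c=\vValue_i^{-1}(0)$ agree exactly with the paper's decomposition into its terms $I$ and $II$ (the paper parametrizes the competition by $Y_i=\max_{j\neq i}\vValue_i^{-1}(\vValue_j(X_j))$ rather than by $W$, but the two integrals coincide). The divergence, and the gap, is in how you compare the two pieces. The paper never differentiates anything further: it rewrites $II=\Exp{m_i(X_i)}$ via the Myerson payment identity as $\Exp{\max(c,Y_i)\indicator{X_i\geq Y_i}\indicator{X_i\geq c}}$, rewrites $I$ by Fubini as $\Exp{\indicator{Y_i\geq c}\indicator{X_i\geq Y_i}\,\vValue_i(Y_i)/\vValue_i'(Y_i)}$, and compares the two integrands pointwise on the joint space of $(X_i,Y_i)$, where the hypothesis enters only as $\vValue_i(y)/\vValue_i'(y)\leq y\leq\max(c,y)$. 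Your plan instead forces a pointwise comparison in the single variable $x$ after integration by parts, and that is where the argument breaks.

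Concretely, write $S=1-F_i$ and carry out your integration by parts with $u=\vValue_i S/\vValue_i'$ and $dv=d\bigl[G(\vValue_i(x))\bigr]=g(\vValue_i(x))\vValue_i'(x)\,dx$. Then
\begin{equation*}
u'(x)=S(x)-\frac{\vValue_i(x)f_i(x)}{\vValue_i'(x)}-\frac{\vValue_i(x)S(x)\vValue_i''(x)}{\vValue_i'(x)^2}\;,
\end{equation*}
so that $\Pi_i'(1)=-\int_c^\infty G(\vValue_i(x))\Phi(x)\,dx$ with, after substituting $\vValue_i f_i=xf_i-S$,
\begin{equation*}
\vValue_i'(x)\,\Phi(x)=f_i(x)\bigl(x\vValue_i'(x)-\vValue_i(x)\bigr)-\frac{\vValue_i(x)S(x)\vValue_i''(x)}{\vValue_i'(x)}\;.
\end{equation*}
The first term is indeed signed by your hypothesis, but the residual does \emph{not} factor through $x\vValue_i'-\vValue_i$: it carries an extra term proportional to $-\vValue_i''$. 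This is a genuine obstruction on two counts. First, the lemma only assumes $\vValue_i$ differentiable, so $\vValue_i''$ need not exist and the integration by parts is not even licensed. Second, even for smooth $\vValue_i$, the hypothesis $\vValue_i\leq x\vValue_i'$ says nothing about the sign or size of $\vValue_i''$ (it is compatible with convex $\vValue_i$, as the paper itself remarks, in which case the extra term is negative), so $\Phi\geq 0$ does not follow. Note also that a pointwise sign argument in $x$ \emph{before} integrating by parts is hopeless — for $\Unif[0,1]$ values the integrand $\vValue_i(x)[G(\vValue_i(x))f_i(x)-S(x)g(\vValue_i(x))]$ changes sign near $x=c$ once $K\geq 4$ — so some rearrangement is unavoidable; the one that works is the payment identity plus Fubini, which moves the factor $\vValue_i/\vValue_i'$ onto the competitor's variable $Y_i$ instead of differentiating it. (Minor additional points: your boundary term at infinity needs $(1-F_i)\vValue_i/\vValue_i'\to 0$, not just $(1-F_i)\vValue_i\to 0$, and you should note that the atom of $G$ at $0$ is harmless because it is multiplied by $\vValue_i(c)=0$.)
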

The conditions of Lemma \ref{lemma:linearShadingInMyerson} are satisfied when $\vValue$ is linear, which happens when $X_i$ has a generalized Pareto distribution. The condition $\vValue(x)\leq \vValue'(x) x$ is also satisfied for $\vValue$ convex which is not of great interest in the current context. 
\begin{proof}
Recall that 
$$
\Pi_i(\alpha)=\Exp{(X_i-\alpha \vValue_{i}(X_i))\indicator{\alpha \vValue_i(X_i)\geq \max_{j\neq i}(0,\vValue_j(X_j))}}\;.
$$
Let us call $Y_i=\max_{j\neq i} \vValue_i^{-1}(\vValue_j(X_j))$ and   
$G(t)=P(t\geq Y_i)$; let  $g$ be its density. Then
$$
\Pi_i(\alpha)=\Exp{(X_i-\alpha \vValue_{i}(X_i)) G(\vValue_{i}^{-1}(\alpha \vValue_{i}(X_i)))\indicator{X_i\geq \vValue_{i}^{-1}(0)}}\;.
$$
Taking the derivative of this quantity with respect to $\alpha$ and splitting the expectation in two, we get
$$
\left.\frac{\partial \Pi_i(\alpha)}{\partial \alpha}\right|_{\alpha=1}=
\underbrace{\Exp{\vValue_i(X_i)\frac{X_i-\vValue_i(X_i)}{\vValue'_i(X_i)} g(X_i) \indicator{X_i\geq \vValue_i^{-1}(0)}}}_{\triangleq ~I}-\underbrace{\Exp{\vValue_i(X_i)G(X_i)\indicator{X_i\geq \vValue_i^{-1}(0)}}}_{\triangleq ~II}
$$
Then, we recognize the expected payment (see \cite{Myerson81}) to get 
$
II=\Exp{m_i(X_i)}\,,
$
and then by definition of the expected payment, 
$$
II =\Exp{\max(\vValue_i^{-1}(0),Y_i)\indicator{X_i\geq Y_i}\indicator{X_i\geq \vValue_i^{-1}(0)}}\,.
$$

We now focus on the first term. We note that $g(X_i)$ is the density of $Y_i$ evaluated at $X_i$. So rewriting the expectation as an integral and using $x-\vValue_i(x)=(1-F_i(x))/f_i(x)$, we get, after using Fubini,
\begin{align*}
I&=\int \frac{\vValue_i(x)}{\vValue'_i(x)} g(x) (1-F(x)) \indicator{x\geq \vValue_i^{-1}(0)}dx=\int \frac{\vValue_i(x)}{\vValue'_i(x)} g(x) \indicator{x\geq \vValue_i^{-1}(0)}\int_x^{\infty} f(t) dt\;,\\
&=\int \int \frac{\vValue_i(x)}{\vValue'_i(x)}\indicator{x\geq \vValue_i^{-1}(0)} \indicator{t\geq x} f(t) g(x) dt dx\;,\\
&=\Exp{\indicator{Y_i\geq \vValue_i^{-1}(0)}\indicator{X_i\geq Y_i} \frac{\vValue_i(Y_i)}{\vValue'_i(Y_i)}}\;.
\end{align*}

We can now compare $I$ and $II$, term by term. We first note that 
$$
\indicator{X_i\geq Y_i}\indicator{X_i\geq \vValue_i^{-1}(0)} \geq \indicator{Y_i\geq \vValue_i^{-1}(0)}\indicator{X_i\geq Y_i}\;.
$$
This inequality is strict on a set of measure non-zero in our setup, so strict inequality passes to expectations.  Under the assumption we made on $\vValue$ we have by definition 
$$
\indicator{Y_i\geq \vValue_i^{-1}(0)}\frac{\vValue_i(Y_i)}{\vValue'_i(Y_i)}\leq \max(\vValue_i^{-1}(0),Y_i)\;.
$$
Then we obtain that $I\leq II$ and in fact $I<II$ when we have strict inequalities in the above two displays on a common set on non-zero measure. 
\end{proof}


Though we do not need symmetry of the bidders' value distribution, we start by a few examples assuming it for concreteness.  We recall that if $F$ is the cdf of $X_i$, $G(x)=F^{n-1}(x)$ in the case where we have $n$ symmetric bidders. 
\\

\textbf{Example of uniform [0,1] distributions:} 
In this case, $\vValue_i(x)=2x-1$ on [0,1] and $\vValue_i^{-1}(0)=1/2$. Also, $G(x)=x^{n-1}$. Then, using for the instance the representation of the derivative of $\Pi_i(\alpha)$ appearing in the proof of Lemma \ref{lemma:linearShadingInMyerson}, we have  
\begin{align*}
\left.\frac{\partial \Pi_i(\alpha)}{\partial \alpha}\right|_{\alpha=1}=
\int_{1/2}^1 \left(x-\frac{1}{2}\right)[(n-1)-x(n+1))]x^{n-2} dx  
    =-\frac{1}{n2^{n+1}}(
    2^n-1)<0\;.
\end{align*}
Hence, each user has an incentive to shade their bid. We note that the derivative goes to 0 as $n\tendsto \infty$ (see also Fig.\ref{fig:uniform_payoff_myerson} right side), which can be interpreted as saying that as the number of users grows, each user has less and less incentive to shade. We can also observe on Fig.\ref{fig:uniform_payoff_myerson} (left side) that the difference between the payoff at optimal shading $\alpha^*$ and the payoff without shading -- $(\Pi(\alpha^*) - \Pi(0))$ -- decreases with $K$.
Indeed, when $K$ grows, the natural level of competition between the bidders makes the revenue optimization mechanisms (e.g. dynamic reserve price) less useful. Logically, being strategic against it in such case does not help much.
For very few bidders, the contrary happens. 
For $K=2$, we even observe that the optimal strategy is to bid with a shading of $\alpha = 0^+$ to force a price close to $0$ while still winning with probability $1/4$ -- when one is beating his reserve and the opponent is not beating his, with the result of almost doubling the payoff. 
\\

\begin{figure}
	\includegraphics[width=0.8\textwidth]{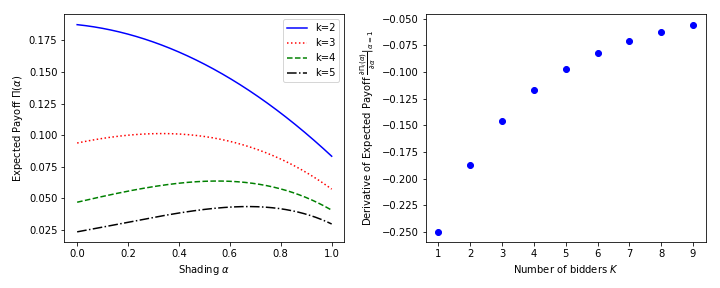}
	\caption{\textbf{Myerson auction : Expected payoff and its derivative for one bidder with linear shading} There are $K$ bidders with values $\Unif[0,1]$, only one of them is strategic. On the left hand side, we present a plot of the expected payoff $\Pi_i(\alpha_i)$ of the strategic bidder for several values of $K$. On the right hand side, we present the derivative $\left.\frac{\partial \Pi_i(\alpha)}{\partial \alpha}\right|_{\alpha=1}$ taken at the truthful bid ($\alpha=1$).}
    \label{fig:uniform_payoff_myerson}	
\end{figure}

In conclusion, we observed that Myerson auction is not immune to deviations for a large number of distributions (generalized Pareto). Even a linear shading can increase substantially the payoff of the bidder at the expense of the seller's revenue. As a corollary, the boosted second price auction incurs the same issue as they are equivalent to Myerson auction for generalized Pareto distributions. Now, we aim at extending these results to other simple of auctions proposed in the context of seller revenue optimization.

\subsection{VCG with Revenue Optimization}
In this section, we  study the robustness of \emph{eager} and \emph{lazy} VCG auction with monopoly price to linear shading of  one bidder. All the computation are very similar for both, differing only by some initial definitions. We denote $a^M_H$ the VCG auction with monopoly reserve price $\vValue_B^{-1}(0)$. Then, the expected payoff of bidder $i$ is 
\begin{align*}
    \Pi_i^M(H) 
    &= \Exp{(X_i-\vValue_{B_i}(B_i))Q^{(a^M_H)}_i(B)}
\end{align*}
Now, considering that all bidders but $i$ are bidding truthfully and bidder $i$ submit bids $B_i = \alpha X_i$, -- i.e. $H_i(b) = F_i(\frac{b}{\alpha})$ -- we obtain the following lemma,
\begin{lemma}\label{lemma:linearShadingInVCG}
For the VCG auction (either \emph{eager} or \emph{lazy}) with monopoly reserve price,
$$
\left.\frac{\partial \Pi^M_i(\alpha)}{\partial \alpha}\right|_{\alpha=1}<0 \text{ at } \alpha=1\;.
$$
In other words, bidder $i$ has an incentive to shade his/her bid even individually. 
\end{lemma}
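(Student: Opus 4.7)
The plan is to mimic the structure of the proof of Lemma~\ref{lemma:linearShadingInMyerson}, exploiting the fact that linear shading in a VCG with monopoly reserve admits a particularly clean parameterization. Under $B_i=\alpha X_i$ the bid-side virtual value is $\vValue_{B_i}(b)=\alpha\,\vValue_i(b/\alpha)$, so the monopoly reserve of bidder $i$ shifts to $\alpha\,\vValue_i^{-1}(0)$. The key observation is that the event ``bidder $i$ clears her own reserve'' equals $\{X_i\ge \vValue_i^{-1}(0)\}$ and is therefore \emph{independent of $\alpha$}; the only $\alpha$-dependence enters through the competition with other bidders.

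Define $Y_i=\max_{j\neq i,\,X_j\ge \vValue_j^{-1}(0)}X_j$ in the eager case and $Y_i=\max_{j\neq i}X_j$ in the lazy case, with density $g$ and cdf $G$. In both variants the allocation reduces to $Q_i=\indicator{X_i\ge\vValue_i^{-1}(0),\,\alpha X_i\ge Y_i}$, and using $\vValue_{B_i}(B_i)=\alpha\,\vValue_i(X_i)$ the payoff takes the closed form
\[
\Pi_i^M(\alpha)=\int_{\vValue_i^{-1}(0)}^{\infty}(x-\alpha\,\vValue_i(x))\,G(\alpha x)\,f_i(x)\,dx.
\]
Differentiating at $\alpha=1$ (the chain rule on $G(\alpha x)$ contributing $x\,g(x)$) and using $x-\vValue_i(x)=(1-F_i(x))/f_i(x)$ in the second piece, the derivative splits as $I-II$ with
\[
II=\Exp{\vValue_i(X_i)\,G(X_i)\,\indicator{X_i\ge \vValue_i^{-1}(0)}},\qquad I=\int_{\vValue_i^{-1}(0)}^{\infty}x\,g(x)\,(1-F_i(x))\,dx.
\]

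Applying Fubini to $I$ (swapping the integrations in $x$ and the tail integral hidden in $1-F_i(x)$) and invoking Myerson's lemma on $II$ (which identifies $II$ with the truthful winner's expected payment, namely the expected second-price value clipped by the reserve), I obtain
\[
I=\Exp{Y_i\,\indicator{Y_i\ge \vValue_i^{-1}(0),\,X_i\ge Y_i}},\quad II=\Exp{\max(\vValue_i^{-1}(0),Y_i)\,\indicator{X_i\ge Y_i,\,X_i\ge \vValue_i^{-1}(0)}}.
\]
A pointwise comparison then concludes: on $\{X_i\ge Y_i\ge \vValue_i^{-1}(0)\}$ the two integrands coincide, while on $\{X_i\ge \vValue_i^{-1}(0),\,X_i\ge Y_i,\,Y_i<\vValue_i^{-1}(0)\}$ the $I$-integrand vanishes and the $II$-integrand equals $\vValue_i^{-1}(0)>0$. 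Under a mild nondegeneracy condition ensuring this latter event has positive probability (e.g.\ some competitor failing to clear her reserve with positive probability), $II>I$ strictly, and therefore $\partial_\alpha\Pi_i^M(\alpha)|_{\alpha=1}=I-II<0$.

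The main delicate point, more than the computation itself, is a careful uniform handling of the two variants: eager and lazy differ only in the definition of $Y_i$ and in boundary conventions when no competitor clears her reserve (set $Y_i=-\infty$ in the eager case), but both fit into the same template above. In contrast to Lemma~\ref{lemma:linearShadingInMyerson}, no technical condition of the form $\vValue_i(x)\le \vValue_i'(x)\,x$ is required here, because linear shading enters the winning threshold additively as $Y_i/\alpha$ rather than through $\vValue_i^{-1}\!\circ(\alpha\,\vValue_i(\cdot))$; the spurious factor $\vValue_i(Y_i)/\vValue_i'(Y_i)$ of the Myerson analysis is thus replaced by the cleaner $Y_i$ itself.
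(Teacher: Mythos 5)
Your proposal is correct and follows essentially the same route as the paper's own proof: the same closed-form payoff $\int_{\vValue_i^{-1}(0)}^{\infty}(x-\alpha\vValue_i(x))G(\alpha x)f_i(x)\,dx$, the same split of the derivative into $I-II$, the same Fubini identity $I=\Exp{Y_i\indicator{Y_i\geq\vValue_i^{-1}(0)}\indicator{X_i\geq Y_i}}$ and Myerson-payment identification of $II$, and the same pointwise indicator comparison to conclude $I<II$. Your added observations --- that the self-reserve event is $\alpha$-invariant, and that the factor $\vValue_i(Y_i)/\vValue_i'(Y_i)$ from Lemma~\ref{lemma:linearShadingInMyerson} is replaced by $Y_i$ so no convexity-type hypothesis on $\vValue_i$ is needed --- are accurate and simply make explicit what the paper leaves implicit.
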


\begin{proof}
The same proof holds for both \emph{eager} and \emph{lazy}, just by using different definition for $Y_i$. We define $Y_i=\max_{j\neq i} \left\{X_j \indicator{\vValue_j(X_j) \geq 0} \right\}$ for the \emph{eager} one and  $Y_i=\max_{j\neq i} X_j$ for the \emph{lazy} one. Define
$G(t)=P(t\geq Y_i)$ and $g$ its density. Assume that bidder $i$ shades his bids linearly, and  rewrite 
$$
\Pi^M_i(\alpha)=\Exp{(X_i-\alpha \vValue_{i}(X_i)) G(\alpha X_i)\indicator{X_i\geq \vValue_{i}^{-1}(0)}}\;.
$$
Let us compute the partial derivatives of the payoff at $\alpha = 1$.

\begin{align*}
    \left.\frac{\partial \Pi^M_i(\alpha)}{\partial \alpha}\right|_{\alpha=1}
    & = \underbrace{\Exp{X_i(X_i-\vValue_i(X_i)) g(X_i) \indicator{X_i\geq \vValue_i^{-1}(0)}}}_{\triangleq ~I}-\underbrace{\Exp{\vValue_i(X_i)G(X_i)\indicator{X_i\geq \vValue_i^{-1}(0)}}}_{\triangleq ~II}
\end{align*}
As before, we denote that $II = \lE\left(m_i(X_i)\right) = \lE\left(\max(\vValue_{i}^{-1}(0),Y_i) [X_i \geq Y_i] [X_i \geq \vValue_{i}^{-1}(0)] \right)$. Then, similarly as before, we use Fubini argument to obtain
\begin{align*}
I &= \int x_i g(x_i) (1 - F_i(x_i))[\vValue_{i}(x_i) \geq 0] {\rm d}x_i = \int \int_{x_i}^\infty x_i g(x_i) f_i(z)[\vValue_{i}(x_i) \geq 0] {\rm d}z {\rm d}x_i\\
& = \lE\left(Y_i [X_i \geq Y_i] [Y_i \geq \vValue_{i}^{-1}(0)] \right)
\end{align*}
We can now compare $I$ and $II$, term by term. We first note that 
$$
\indicator{X_i\geq Y_i}\indicator{X_i\geq \vValue_i^{-1}(0)} \geq \indicator{Y_i\geq \vValue_i^{-1}(0)}\indicator{X_i\geq Y_i}\;.
$$
This inequality is strict on a set of measure non-zero, so $I < II$.
\end{proof}

\begin{figure}
	\includegraphics[width=0.8\textwidth]{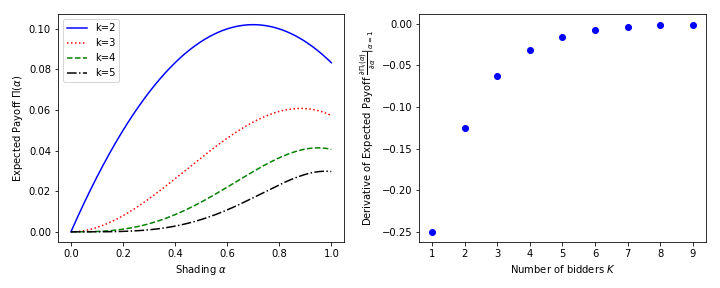}
	\caption{\textbf{VCG \emph{lazy} : Expected payoff and its derivative for one bidder with linear shading} There are $K$ bidders with values $\Unif[0,1]$, only one of them is strategic. On the left hand side, we present a plot of the expected payoff $\Pi_i(\alpha_i)$ of the strategic bidder for several values of $K$. On the right hand side, we present the derivative $\left.\frac{\partial \Pi_i(\alpha)}{\partial \alpha}\right|_{\alpha=1}$ taken at the truthful bid ($\alpha=1$).}
    \label{fig:uniform_payoff_vcg_lazy}	
\end{figure}

We can observe on Figure \ref{fig:uniform_payoff_vcg_lazy} a very similar results for the \emph{lazy} VCG as on Figure \ref{fig:uniform_payoff_myerson} in the Myerson case. There is an incentive to shade the bids, decreasing with the number of bidders. However, the optimal shading is less aggressive than in Myerson case, resulting in a smaller increment of the bidder's payoff. Our intuition is that the class of auctions $\cA$ over which the seller maximizes his revenue is smaller than in Myerson's case, providing less leverage to the bidders to be strategic. Investigating the link between the complexity of $\cA$ (e.g. pseudo-dimension) and the gain for the bidders in being strategic is definitely of interest for future research.

\section{Further results for the Myerson auction}
\label{sec:myerson}
\newcommand{\shadingFunc}{\beta}
\newcommand{\genFunc}{\gamma}
\subsection{Formulation of the problem}\label{subsec:optMyersonPbFormulation}
\subsubsection{Differential equation formulation}
We go back to Myerson auction where bidder 1 shades her bid by bidding $B=\shadingFunc(X_1)$, for a function $\shadingFunc$ to be determined later, instead of her value $X_1$.

Let $\vValue_B$ be the virtual value function associated with the distribution of $B$ (in this section we identify a random variable with its distribution, by a slight abuse of notation). We call $F_Z$ the cumulative distribution function of $Z=\max_{2\leq i \leq K}(0,\vValue_i(Y_i))$, where $Y_i$ is the bid of bidder $i$, and $\vValue_i$ is the virtual value function associated with the distribution of this bid. We assume as before that bidder 1 faces $K-1$ other bidders for a total of $K$ bidders involved in the auction. We call $V_i=\vValue_i(Y_i)$ and $F_{V_i}$ the associated cumulative distribution function. We assume as before that all bidders are independent. In this case, it is clear that $F_Z(x)=\prod_{i=2}^K F_{V_i}(x)\indicator{x\geq 0}$. Of course, this cdf has a jump discontinuity at 0 when $\prod_{i=2}^K F_{V_i}(0)>0$. 

Recall that in the Myerson auction, the expected payment 
of bidder 1 when she bids using $B$ is 
$$
\Exp{\vValue_B(B) \indicator{\vValue_B(B)\geq Z}}\;.
$$
See \cite{krishna2009auction}, p. 67 and \cite{RoughgardenTLA2016} for details. In the Myerson auction, the expected payoff of bidder 1 when she shades through $\shadingFunc$ is therefore
$$
\Pi(\shadingFunc)=\Exp{(X_1-\vValue_B(B))F_Z(\vValue_B(B))}\;.
$$

Suppose that $\shadingFunc \mapsto \shadingFunc_t=\shadingFunc +  t \rho$, where $t>0$ is small and $\rho$ is a function. Let  $B_t=\shadingFunc_t(X_1)$. We denote the corresponding virtual value function by $\vValue_{B_t}$. 

We have (see Lemma \ref{lemma:directionalDerivativeMyerson} in the Appendix) if we differentiate with respect to $t$ and hence take a directional derivative for $\shadingFunc$ in the direction of $\rho$, under mild conditions on $\shadingFunc$, 
\begin{align}\label{eq:directionalDerivativePayoff}
\frac{\partial }{\partial \shadingFunc} \Pi(\shadingFunc)
&=\Exp{\left.\frac{\partial}{\partial t} \vValue_{B_t}(B_t)\right|_{t=0} [(X_1-\vValue_B(B))f_Z(\vValue_B(B))-F_Z(\vValue_B(B))]\indicator{\vValue_B(B)>0}}
\\
&+\left.\frac{\partial}{\partial t} \vValue_{B_t}(B_t) \right|_{t=0,\vValue_B(B)=0}\prod_{i=2}^K F_{V_i}(0) f_{1}(x_{1,\shadingFunc}) x_{1,\shadingFunc}\;,\notag
\end{align}
where $x_{1,\shadingFunc}$ is such that $b=\shadingFunc(x_{1,\shadingFunc})$ and $\vValue_B(b)=0$. We note that $f_Z(t)=0$ and $F_Z(t)=0$ when $t<0$. 

In the work below, we naturally seek a shading function $\shadingFunc$ such that these directional derivatives are equal to 0. We will therefore be interested in particular in functions $\shadingFunc$ such that $[x-\vValue_B(\shadingFunc(x))]f_Z(\vValue_B(\shadingFunc(x)))=F_Z(\vValue_B(\shadingFunc(x)))$, when $\vValue_B(\shadingFunc(x))>0$. The second term in our equation has intuitively to do with the event where the other bidders are discarded for not beating their reserve price. As we will see below, we can sometimes ignore this term, for instance when an equilibrium strategy exists which amounts to canceling the reserve prices. 

Let us first present some intermediary results  to get symmetric equilibrium result among others. 

\subsubsection{Key ODEs and consequences}
As before, we call $X_1$ a non-negative random variable representing the distribution of values of bidder 1 with density $f_1$. For simplicity in defining virtual values we assume that $f_1>0$ on the support\footnote{Taking care of the case where $f_1$ can take the value 0 at a few points introduces artificial technical problems that are not particularly hard to solve but would obscure the flow of our argument.} of $X_1$. 

\begin{lemma}\label{lemma:keyODEs}
Suppose $B=\genFunc(X_1)$, where $\genFunc$ is increasing and differentiable. If $b=\genFunc(x_1)$, we have 
\begin{equation}\label{eq:ODEPhiG}
\vValue_B(b)=\genFunc(x_1)+\genFunc'(x_1)[\vValue_1(x_1)-x_1]\;.
\end{equation}
Furthermore, if for some $x_0$ and a function $h$ we have 
$$
\genFunc_h(x)=\frac{\genFunc_h(x_0)(1-F_1(x_0))-\int_{x_0}^{x} h(u) f_1(u) du}{1-F_1(x)}\;,
$$
then 
\begin{equation}\label{eq:SolnPhiBEqualsh}
\genFunc_h(x_1)+\genFunc_h'(x_1)[\vValue_1(x_1)-x_1]= h(x_1)\;.
\end{equation}
\end{lemma}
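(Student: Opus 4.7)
Both identities are essentially change-of-variable computations, and the plan is to reduce them to the ODE relation \eqref{eq:edp_beta_virtual_values} recalled in Section 2.

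For equation \eqref{eq:ODEPhiG}, I would start from the definition $\vValue_B(b) = b - (1-F_B(b))/f_B(b)$ and apply the standard change-of-variable formulas for a monotone transformation: since $B = \genFunc(X_1)$ with $\genFunc$ increasing, $F_B(b) = F_1(x_1)$ and $f_B(b) = f_1(x_1)/\genFunc'(x_1)$ where $b = \genFunc(x_1)$. Substituting these yields
\[
\vValue_B(b) = \genFunc(x_1) - \genFunc'(x_1)\,\frac{1-F_1(x_1)}{f_1(x_1)}.
\]
Then I would use the identity $\frac{1-F_1(x_1)}{f_1(x_1)} = x_1 - \vValue_1(x_1)$, which is just a rearrangement of the definition of $\vValue_1$, to conclude $\vValue_B(b) = \genFunc(x_1) + \genFunc'(x_1)[\vValue_1(x_1) - x_1]$.

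For equation \eqref{eq:SolnPhiBEqualsh}, the key observation is that the defining formula for $\genFunc_h$ can be rewritten as
\[
\genFunc_h(x)\,(1-F_1(x)) = \genFunc_h(x_0)\,(1-F_1(x_0)) - \int_{x_0}^{x} h(u)\,f_1(u)\,du.
\]
I would differentiate both sides in $x$ (using the fundamental theorem of calculus on the right and the product rule on the left), obtaining
\[
\genFunc_h'(x)\,(1-F_1(x)) - \genFunc_h(x)\,f_1(x) = -h(x)\,f_1(x),
\]
and then divide by $f_1(x)$ to get $\genFunc_h'(x)\cdot\frac{1-F_1(x)}{f_1(x)} = \genFunc_h(x) - h(x)$. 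Applying again $\frac{1-F_1(x)}{f_1(x)} = x - \vValue_1(x)$ gives exactly $\genFunc_h(x) + \genFunc_h'(x)[\vValue_1(x) - x] = h(x)$, which is \eqref{eq:SolnPhiBEqualsh}. Equivalently, one can notice that combining \eqref{eq:ODEPhiG} with \eqref{eq:SolnPhiBEqualsh} means the construction $\genFunc_h$ is precisely the candidate strategy that induces virtual value $\vValue_{B}(B) = h(X_1)$ at the point $B = \genFunc_h(X_1)$.

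There is no real obstacle here: both statements are one-line computations once one recognizes that the formula defining $\genFunc_h$ is exactly the integrating-factor solution of the linear first-order ODE $\genFunc'(x)(1-F_1(x)) - \genFunc(x) f_1(x) = -h(x) f_1(x)$, which in turn is \eqref{eq:SolnPhiBEqualsh} after dividing by $f_1(x)$ and using the virtual-value identity. The only minor care needed is to assume $f_1 > 0$ on the support (as stated in the excerpt) so that the division is well defined, and $1 - F_1(x) > 0$ on the interior of the support so that the formula for $\genFunc_h$ makes sense.
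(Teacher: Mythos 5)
Your proof is correct and follows essentially the same route as the paper: the change-of-variable formulas $F_B(b)=F_1(x_1)$, $f_B(b)=f_1(x_1)/\genFunc'(x_1)$ for Equation \eqref{eq:ODEPhiG}, and the observation that $\genFunc_h(x)(1-F_1(x))$ is the integrating-factor form of the linear ODE for Equation \eqref{eq:SolnPhiBEqualsh}. The only difference is cosmetic --- you differentiate the product explicitly where the paper writes the left-hand side as $(\genFunc_h(F_1-1))'/f_1$ and then ``verifies by direct computation'' --- and you correctly use $x_1-\vValue_1(x_1)=(1-F_1(x_1))/f_1(x_1)$, which the paper's own proof misstates as $\vValue_1(x_1)=(1-F_1(x_1))/f_1(x_1)$.
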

\paragraph{\textbf{Interpretation}} Informally, the previous result says that it would be very easy for bidder 1 to shade her bid in such a way that the virtual value of her bid $b$, i.e. $\vValue_B(b)$, be any function $h$ of her value she chooses. 

A simple consequence of Lemma \ref{lemma:keyODEs} is the following result, which pertains directly to non-linear shading strategies.  
\begin{lemma}\label{lemma:makingSureGIncreasing}
Let $h$ be an increasing function. Call $(\mathfrak{l},\mathfrak{u})$ the support of $X_1$. Assume that $f_1>0$ on $(\mathfrak{l},\mathfrak{u})$. ~Then, $\genFunc$ defined as  
\begin{equation}\label{eq:gAsConditionalExpectation}
\genFunc(x)=\frac{\int_x^{\mathfrak{u}} h(t) f_1(t) dt}{1-F_1(x)}=\Exp{h(X_1)|X_1\geq x}
\end{equation}
is increasing and differentiable on the support of $X_1$. 

In particular, if $B=\genFunc(X_1)$, we have, for $b=\genFunc(x_1)$, 
$$
\vValue_B(b)=h(x_1)\;, \forall x_1 \in (\mathfrak{l},\mathfrak{u})\;.
$$
\end{lemma}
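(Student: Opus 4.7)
The plan is to verify the three claims in turn: (i) $\genFunc$ is well-defined, differentiable, and increasing on the interior of the support, (ii) the representation $\genFunc(x)=\Exp{h(X_1)\mid X_1\geq x}$ holds, and (iii) applying Lemma~\ref{lemma:keyODEs} with this particular $\genFunc$ gives $\vValue_B(b)=h(x_1)$.

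Claim (ii) is immediate: since $X_1$ has density $f_1$, the conditional density of $X_1$ given $X_1\geq x$ is $f_1(t)/(1-F_1(x))$ on $[x,\mathfrak{u})$, so the integral in \eqref{eq:gAsConditionalExpectation} is exactly $\Exp{h(X_1)\mid X_1\geq x}$. For (i), on $(\mathfrak{l},\mathfrak{u})$ we have $1-F_1(x)>0$, and since $f_1$ and $h\cdot f_1$ are locally integrable, $\genFunc$ is differentiable by the fundamental theorem of calculus. A direct computation gives
\begin{equation*}
\genFunc'(x)=\frac{f_1(x)}{1-F_1(x)}\bigl[\genFunc(x)-h(x)\bigr]\;,
\end{equation*}
and because $h$ is increasing, $\genFunc(x)=\Exp{h(X_1)\mid X_1\geq x}\geq h(x)$, with strict inequality on the interior whenever $h$ is not constant on $[x,\mathfrak{u})$. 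Hence $\genFunc'(x)\geq 0$ (and is strictly positive on the interior of the support), so $\genFunc$ is increasing and differentiable there.

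For (iii), I will match the formula \eqref{eq:gAsConditionalExpectation} to the template in Lemma~\ref{lemma:keyODEs}. Take $x_0=\mathfrak{u}$: the boundary term $\genFunc_h(x_0)(1-F_1(x_0))$ vanishes (since $1-F_1(\mathfrak{u})=0$ and $\genFunc$ stays bounded, or by direct continuation), and $-\int_{\mathfrak{u}}^{x} h(u)f_1(u)\,du=\int_{x}^{\mathfrak{u}} h(u)f_1(u)\,du$. Thus our $\genFunc$ coincides with $\genFunc_h$ in \eqref{eq:SolnPhiBEqualsh} and satisfies
\begin{equation*}
\genFunc(x_1)+\genFunc'(x_1)\bigl[\vValue_1(x_1)-x_1\bigr]=h(x_1)\;.
\end{equation*}
Combining this with the ODE \eqref{eq:ODEPhiG} in Lemma~\ref{lemma:keyODEs}, which states that $\vValue_B(b)=\genFunc(x_1)+\genFunc'(x_1)[\vValue_1(x_1)-x_1]$ whenever $b=\genFunc(x_1)$, yields $\vValue_B(b)=h(x_1)$ on $(\mathfrak{l},\mathfrak{u})$.

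The only genuinely delicate point is the monotonicity argument at the boundary, where both $\genFunc(x)$ and $h(x)$ coincide in the limit $x\to\mathfrak{u}$, so $\genFunc'$ could vanish there; the assumption that $h$ is increasing (and $f_1>0$ on the support) ensures $\genFunc(x)>h(x)$ in the interior and prevents this degeneracy from affecting the statement. The boundedness of $\genFunc$ near $\mathfrak{u}$ (needed to kill the $\genFunc(x_0)(1-F_1(x_0))$ term when $\mathfrak{u}=+\infty$) is also worth a brief remark, but holds as soon as $h$ has finite conditional expectation near the upper tail, which is implicit in asking for a well-defined bidding strategy.
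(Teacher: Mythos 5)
Your proof is correct and follows essentially the same route as the paper: compute $\genFunc'(x)=\frac{f_1(x)}{1-F_1(x)}[\genFunc(x)-h(x)]$, use monotonicity of $h$ to get the strict inequality $\Exp{h(X_1)\mid X_1\geq x}>h(x)$ on the interior, and then invoke Lemma~\ref{lemma:keyODEs} to conclude $\vValue_B(b)=h(x_1)$. The only cosmetic difference is that your derivative identity already gives $\genFunc(x)-\genFunc'(x)\frac{1-F_1(x)}{f_1(x)}=h(x)$ directly, so the discussion of the boundary term at $x_0=\mathfrak{u}$ is not actually needed.
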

The proof of these technical but not difficult lemmas are in the Appendix, Subsection \ref{app:subsec:proofODEs}. 

\subsection{The case of symmetric bidders}\label{subsec:shadingInSymmMyerson}

\begin{theorem}\label{thm:shadingInSymmMyerson}
Consider an auction with $K$ independent and symmetric bidders, having value distribution represented by the random variable $X$. 

In the Myerson auction, a symmetric equilibrium strategy for the bidders is to shade their bids by a function $\shadingFunc_{eq}$ that satisfies 
$$
\shadingFunc_{eq}(x)+\shadingFunc_{eq}'(x)(\vValue_X(x)-x)=\beta^{I}(x)\;,
$$
where $\beta^{I}(x)$ is their symmetric equilibrium first price bid in a first price auction with no reserve price. 

A solution of this equation is 
\begin{align*}
\shadingFunc_{eq,M}(x)&=\Exp{\beta^{I}(X)|X\geq x}\;, \text{ with }\\
b_1&=\shadingFunc_{eq,M}(x_1)=\Exp{\beta^{I}(X)|X\geq x_1} \text{ being the shaded bid of bidder 1}\;.
\end{align*}
With this strategy, the bidders' expected payoffs are the same as what they would get in a first price auction with no reserve price. In particular, it is strictly greater than their expected payoffs had they bid truthfully. 
\end{theorem}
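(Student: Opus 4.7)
The plan is to obtain the symmetric equilibrium by writing the first-order condition arising from the directional derivative \eqref{eq:directionalDerivativePayoff}, identifying the resulting ODE as the one characterizing the symmetric no-reserve first-price equilibrium bid $\beta^{I}$, and then invoking Lemma \ref{lemma:makingSureGIncreasing} to produce an explicit shading function. Payoff equivalence with a no-reserve first-price auction will then follow from a short Fubini calculation.

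First I would assume every bidder uses a candidate symmetric shading $\shadingFunc$ and write $h(x):=\vValue_{B}(\shadingFunc(x))$ for the induced virtual-value function. Since the $V_i=h(X_i)$ are i.i.d.\ strictly increasing transforms of the $X_i$, for $v>0$ one has
$$F_Z(v)=F_1(h^{-1}(v))^{K-1}, \qquad f_Z(v)=\frac{(K-1)\,F_1(h^{-1}(v))^{K-2}f_1(h^{-1}(v))}{h'(h^{-1}(v))}.$$
Plugging this into \eqref{eq:directionalDerivativePayoff} and requiring the directional derivative to vanish for every admissible perturbation $\rho$ forces the bracketed integrand to be zero on $\{x:h(x)>0\}$, which, after the change of variable $v=h(x)$, becomes
$$(K-1)\,f_1(x)\,(x-h(x))=F_1(x)\,h'(x).$$
With the natural boundary condition $h(\mathfrak{l})=\mathfrak{l}$, this is precisely the ODE of the symmetric no-reserve first-price equilibrium bid $\beta^{I}$---differentiating the classical identity $\beta^{I}(x)F_1(x)^{K-1}=\int_{\mathfrak{l}}^{x} t\,(K-1)F_1(t)^{K-2}f_1(t)\,dt$ yields the same equation---so uniqueness forces $h=\beta^{I}$.

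Next I would apply Lemma \ref{lemma:makingSureGIncreasing} with the increasing function $\beta^{I}$ to obtain an increasing and differentiable $\shadingFunc_{eq,M}(x)=\Exp{\beta^{I}(X)\mid X\ge x}$ realizing $\vValue_{B}(\shadingFunc_{eq,M}(x))=\beta^{I}(x)$, after which Lemma \ref{lemma:keyODEs} (eq.~\eqref{eq:ODEPhiG}) immediately recovers the ODE $\shadingFunc_{eq,M}(x)+\shadingFunc_{eq,M}'(x)(\vValue_X(x)-x)=\beta^{I}(x)$ claimed in the theorem. The boundary contribution in \eqref{eq:directionalDerivativePayoff} vanishes because $\beta^{I}(\mathfrak{l})=\mathfrak{l}$ pins $x_{1,\shadingFunc_{eq,M}}$ to the left edge of the support---the shading absorbs the monopoly reserve. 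For the payoff equivalence, since $\beta^{I}$ is strictly increasing, the Myerson allocation coincides with the no-reserve first-price allocation (highest value wins), and bidder 1's Myerson payment is the threshold bid $\shadingFunc_{eq,M}(Y)$, where $Y=\max_{j\ne 1}X_j$. Her expected payoff is $\Exp{(X_1-\shadingFunc_{eq,M}(Y))\indicator{X_1\ge Y}}$; using $\shadingFunc_{eq,M}(y)(1-F_1(y))=\int_y^{\mathfrak{u}}\beta^{I}(t)f_1(t)\,dt$ and Fubini, the expected payment reduces to $\Exp{\beta^{I}(X_1)F_1(X_1)^{K-1}}=\Exp{\beta^{I}(X_1)\indicator{X_1\ge Y}}$, giving exactly the no-reserve first-price payoff $\Exp{(X_1-\beta^{I}(X_1))\indicator{X_1\ge Y}}$. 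Strict dominance over truthful bidding then follows since Myerson with symmetric bidders (equivalently VCG at the monopoly reserve) extracts strictly more bidder surplus than a no-reserve first-price auction.

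The hard part will be \emph{not} the ODE manipulation but showing that this first-order stationary point is a genuine best response rather than only a critical point: this requires either a concavity/quasi-concavity argument for $\shadingFunc\mapsto\Pi(\shadingFunc)$ in the $h$-parametrization, or an envelope-style direct comparison against arbitrary alternative shadings. A second subtlety is the rigorous treatment of the atom at $v=0$ in $F_Z$---the second summand of \eqref{eq:directionalDerivativePayoff}---which is precisely the mechanism that lets the equilibrium cancel the reserve price.
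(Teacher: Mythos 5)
Your proposal is correct and follows essentially the same route as the paper's proof: compute $F_Z$ and $f_Z$ in the symmetric case, set the bracketed term in the directional derivative \eqref{eq:directionalDerivativePayoff} to zero to obtain $(K-1)(x-h(x))f_X(x)=h'(x)F_X(x)$, recognize this as the first-price-equilibrium ODE so that $h=\beta^{I}$, and invoke Lemmas \ref{lemma:keyODEs} and \ref{lemma:makingSureGIncreasing} for the explicit shading and the vanishing of the atom at $0$. The only cosmetic differences are your Fubini verification of the payoff equivalence (the paper reads it off directly from the virtualized-bid payment formula) and your flagging of the global-optimality issue, which the paper's ``verification argument'' likewise leaves at the level of first-order conditions.
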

\paragraph{\textbf{Discussion}} The intuition behind this result is quite clear. In the Myerson auction, the expected payoff of any given bidder is the same as that of a first price auction where her bids have been transformed through the use of her virtual value function. We call the corresponding pseudo-bids virtualized bids. Hence, if the bidders can bid in such a way that their ``virtualized" bids are equal to their symmetric equilibrium first price bids, the situation is completely equivalent to a first price auction. And hence their equilibrium strategy in virtualized bid space should be the strategy they use in a standard first price auction with no reserve price. 

Our theorem shows that by adopting such a strategy symmetric bidders can avoid facing a non-zero reserve price. Furthermore, Lemmas \ref{lemma:keyODEs} and \ref{lemma:makingSureGIncreasing} show that it is easy for bidders to shade in such a way that their virtualized bids are equal to any increasing function of their value they choose. Also, this shading is specific to each bidder: the corresponding ordinary differential equations do not involve the other bidders. As such it is also quite easy to implement. 

Nonetheless, the shading is quite counter-intuitive at first, since bidders may end up bidding higher than their value (for instance if their value is 0). Their payments are however made in terms of \emph{virtualized bids}, at least in expectation. And of course, in terms of virtualized bids, nothing is counter-intuitive: everything has been done so that their virtualized bids are equal to their first price bids, which are less than their values.

\begin{proof}
We first investigate properties of $f_Z$ and $F_Z$ in symmetric situations and then verify, as is classical (see \cite{krishna2009auction}, Chapter 2), that our proposed solution is indeed an equilibrium. 

Suppose the bidders shade using the shading function $\shadingFunc$. 
Let us call $h(x)=\vValue_B(\shadingFunc(x))$, when $B_i=\shadingFunc(X_i)$. Note that all $B_i$'s have the same distribution in symmetric equilibrium, which we call $B$. 

In a symmetric equilibrium, everybody will use this shading. Recall that the key relation was 
$$
(x_1-\vValue_{B_1}(\shadingFunc(x_1)))f_Z(\vValue_{B_1}(\shadingFunc(x_1)))-F_Z(\vValue_{B_1}(\shadingFunc(x_1)))=0\;, \text{ for } \vValue_{B_1}(\shadingFunc(x_1)) > 0\;.
$$

$\bullet$ \textbf{Preliminaries 1:} Symmetric situation: property of $F_Z$ and $f_Z$:
If $W=h(X)$, with $h$ increasing, 
$$
F_W(t)=P(W\leq t)=P(h(X)\leq t)=P(X\leq h^{-1}(t))=F_X(h^{-1}(t))\;.
$$
Therefore, 
$$
F_W(t)=F_X(h^{-1}(t))\;,  f_W(t)=\frac{f_X(h^{-1}(t))}{h'(h^{-1}(t))} \; ; \; \; 
F_W(h(x))=F_X(x)\;, f_W(h(x))=\frac{f_X(x)}{h'(x)}\;.
$$
If $X_2,\ldots,X_K$ are independent and are using this strategy, we have for $h(x)>0$, 
$$
F_Z(h(x))=F_X^{K-1}(x) \text{ and } f_Z(h(x))= (K-1) \frac{f_X(x)}{h'(x)} F_X^{K-2}(x)\;.
$$
So the key relation 
$$
(x_1-\vValue_{B_1}(\shadingFunc(x_1)))f_Z(\vValue_{B_1}(\shadingFunc(x_1)))-F_Z(\vValue_{B_1}(\shadingFunc(x_1)))=0\;, \text{ for } \vValue_{B_1}(\shadingFunc(x_1)) > 0\;.
$$
can now be re-written as 
$$
(x-h(x))(K-1)\frac{f_X(x)}{h'(x)}=F_X(x)\;, 
\text{ or } 
\boxed{(K-1)(x-h(x)) f_X(x)-h'(x)F_X(x)=0}\;, h(x)\geq 0\\
$$
$\bullet$ \textbf{Preliminaries 2:} Connection with first price auctions\\
\noindent We recognize here the equation for $h$ defining the shading strategy in a first price auction with no reserve price (see \citet{krishna2009auction}, Chapter 2). We can also solve this equation in a very simple way. Indeed, the most general solution of this differential equation is just  
$$
h=h_0 F^{-(K-1)} \text{ with } h_0'(x)=\frac{K-1} xf_X(x) F_x^{K-2}\;.
$$
Taking 
$$
h(x)=\frac{\int_0^x yg(y) dy}{G(x)}\;, \text{ where } G(x)=F^{K-1}(x)\;, \text{ and } g=G'\;,
$$
we find a solution that is increasing, with $h(x)>0$ for $x>0$. Of course, $h$ can be reinterpreted as 
$$
h(x)=\Exp{Y_1|Y_1<x}=\beta^{I}(x), 
$$ where $Y_1$ is the maximum bid of bidder 1's $(K-1)$ competitors in the auction. This of course is nothing but a symmetric equilibrium first price bid in the symmetric case with no reserve price, see \cite{krishna2009auction}, p. 15. 

$\bullet$ \textbf{Verification argument}
Equipped with the results we derived above, the last step of the proof is just a verification argument. 
We note that $\beta^{I}(x)$ is an increasing function of $x$. Furthermore, if $\genFunc$ is a solution of 
$$
\genFunc(x)+\genFunc'(x)(\vValue_X(x)-x)=\beta^{I}(x)\;,
$$
such that 
$$
\genFunc(x)=\frac{\int_x^{\mathfrak{u}}\beta^{I}(t)f(t) dt}{1-F(x)}=\Exp{\beta^{I}(X)|X\geq x}\;,
$$
we have seen in Lemma \ref{lemma:makingSureGIncreasing} that $\genFunc$ is increasing under our assumptions when $h(x)=\beta^I(x)$ is increasing as is the case here. Also $\beta^{I}(x)>0$ for $x>0$ and $\beta^{I}(x)=0$ if $x=0$.  ($\mathfrak{u}$ is the right end point of the support of $X$.) 

We conclude that our function $\shadingFunc_{eq,M}=\Exp{\beta^{I}(X)|X\geq x}$ is increasing. 

We need to verify that the problem we are dealing with is regular, so that the payoff of the Myerson auction is indeed what we announced and in particular, no ironing is necessary (see \citet{Myerson81} or \citet{Toikka2011}) . Almost by definition, we have, if $b=\shadingFunc_{eq,M}(x_1)$,
$$
\vValue_B(b)=\vValue_B(\shadingFunc_{eq,M}(x_1))=\beta^{I}(x_1)=\beta^{I}(\shadingFunc_{eq,M}^{-1}(b))\;.
$$
Since $\shadingFunc_{eq,M}$ is increasing, so is $\shadingFunc_{eq,M}^{-1}$; and since $\beta^{I}$ is increasing, so is $\beta^{I}\circ\shadingFunc^{-1}_{eq,M}$. We conclude that $\vValue_B$ is increasing and so the design problem for the seller receiving $B$ is regular. 

Now as we are in the symmetric case, if our adversaries use this strategy, bidder 1's opposition in virtual value space is $Z=\max(0,\vValue_{B_i}(\shadingFunc_{eq,M}(X_i)))=\max_{2\leq i \leq K}(0,\beta^{I}(X_i))=\max_{2\leq i \leq K}(\beta^{I}(X_i))$. For this last equality we have used the well-known and easy to verify fact that $\beta^{I}(X_i)\geq 0$ with probability 1. So the distribution of $Z$ is continuous, it has no discontinuity at 0. In particular, for the functional derivative of our payoff we have, in the notation of Subsection \ref{subsec:optMyersonPbFormulation}
$$
\frac{\partial}{\partial \shadingFunc }\Pi(\shadingFunc_{eq,M})=\Exp{\left.\frac{\partial}{\partial t} \vValue_{B_t}(B_t)\right|_{t=0} [(X_1-\vValue_B(B))f_Z(\vValue_B(B))-F_Z(\vValue_B(B))]}\;.
$$
Now we have done everything so that with $b=\shadingFunc_{eq,M}(x_1)$ as above (and $B=\shadingFunc_{eq,M}(X_1))$)
$$
[(x_1-\vValue_B(\shadingFunc_{eq,M}(x_1)))f_Z(\vValue_B(\shadingFunc_{eq,M}(x_1)))-F_Z(\vValue_B(\shadingFunc_{eq,M}(x_1)))]=0 \text{ for all } x_1 \text{ in the support of } X_1\;.
$$

Of course the same reasoning applies to the other bidders. To finish the equilibrium proof, we note that the expected payoff of the bidder shading her bid as described above is the same as what she would get in a first price auction with no reserve price. Using the revenue equivalence principle (see \cite{krishna2009auction}, Chapter 3), this is also what she would get in a second price auction with no reserve price. Of course, the standard Myerson auction where bidders are symmetric and bid truthfully amounts to performing a second price auction with reserve price set at the monopoly price $\vValue_X^{-1}(0)$. The expected payoff of the bidders for this latter auction is clearly strictly less than that in the second price auction with no reserve price. We conclude that the expected payoff of the symmetric bidders using the strategy described above is strictly greater than what they would have gotten had they bid truthfully. 
\end{proof}


\subsection{The case of one strategic bidder}
The case of one strategic bidder is also of interest, and motivated by the different nature of the various bidders involved in online advertising auctions. Recent work on boosted second price auctions \cite{Golrezaei2017} was in part motivated by the desire to account for this diversity, between for instance what these authors call brand bidders and retargeting bidders, and also to simplify the implementation of the Myerson auction. Recall that \cite{Golrezaei2017} propose to effectively linearize the virtual value of the bidders before applying a Myerson-type approach on these ``linearly-virtualized" bids. 

In this context, distributions with exactly linear virtual value play a particular role. It is easy to show that those distributions are Generalized Pareto (GP) distributions. We refer to  \cite{Balseiro2015MultiStage} and \cite{AllouahBesbes2017} for their use in the auction context and for pointing out their remarkable simplicity in terms of virtual value computations.

When doing explicit computations or focusing on boosted second price auctions we will naturally also make use of this family of distributions.

\subsubsection{General formulation}
The problem faced by bidder 1 in the Myerson auction has not changed. She seeks to shade her bid through a mapping $\shadingFunc$, i.e. bid $B=\shadingFunc(X_1)$ so as to maximize 
$$
\Pi(\shadingFunc)=\Exp{[X_1-\vValue_B(B)] F_Z(\vValue_B(B))}\;.
$$
Nonetheless, two aspects of the problem will now differ from our earlier work: on the one hand $F_Z$, which represents the bid distributions she is facing is now ``fixed", i.e. unaffected by $\shadingFunc$, because the other bidders are non-strategic; on the other hand, the shading function $\shadingFunc$ will on occasion be considered to be part of a parametric family. In this case, what was before a directional derivative will then be a simple gradient. 

With this in mind, and with a slight overloading of notations (since in the parametric case $\partial/\partial t$ below is just a gradient), we recall the key relationship 
\begin{align*}
\frac{\partial }{\partial \shadingFunc} \Pi(\shadingFunc)
&=\Exp{\left.\frac{\partial}{\partial t} \vValue_{B_t}(B_t)\right|_{t=0} [(X_1-\vValue_B(B))f_Z(\vValue_B(B))-F_Z(\vValue_B(B))]\indicator{\vValue_B(B)>0}}
\\
&+\left.\frac{\partial}{\partial t} \vValue_{B_t}(B_t) \right|_{t=0,\vValue_B(B)=0}\prod_{i=2}^K F_{V_i}(0) f_{1}(x_{1,\shadingFunc}) x_{1,\shadingFunc}\;,\notag
\end{align*}
where $x_{1,\shadingFunc}$ is such that $b=\shadingFunc(x_{1,\shadingFunc})$ and $\vValue_B(b)=0$.
We will be keenly interested in shading functions $\shadingFunc$ such that $$
(x_1-\vValue_B(\shadingFunc(x_1)))f_Z(\vValue_B(\shadingFunc(x_1)))=F_Z(\vValue_B(\shadingFunc(x_1)))\;, \text{ when } \vValue_B(\shadingFunc(x_1))>0\;. 
$$
Indeed, for those $\shadingFunc$'s, the expectation in our differential will be 0. Hence, computing the differential will be relatively simple and in particular will give us reasonable guesses for $\shadingFunc$ and descent directions, even if it does not always give us directly an optimal shading strategy. Furthermore, when $K$ is large, the second term fades out, as the probability that no other bidder clear their reserve prices becomes very small. 

If we proceed formally, and call, for $x>0$, $h(x)=(\id+F_Z/f_Z)^{-1}(x)$ (temporarily assuming that this - possibly generalized - functional inverse can be made sense of), we see that solving the previous equation amounts to solving
$$
\vValue_B(\shadingFunc(x_1))=(\id+F_Z/f_Z)^{-1}(x_1)=h(x_1)\;.
$$
Lemma \ref{lemma:keyODEs} can of course be brought to bear on this problem. We note that we will be primarily interested in solutions of this equation for $x_1$'s such that $\vValue_B(\shadingFunc(x_1))>0$. 

\subsubsection{Explicit computations in the case of Generalized Pareto families}
It is clear that now we need to understand $F_Z$, $F_Z/f_Z$ and related quantities to make progress. By definition, if $X$ is Generalized Pareto (GP) with parameters $(\mu,\sigma,\xi)$, we have, when $\xi<0$ and $t\in [\mu,\mu-\sigma/\xi]$, 
$$
P(X\geq t)=(1+\xi (t-\mu)/\sigma)^{-1/\xi}\;
$$
and otherwise $P(X\geq t)=\exp(-(t-\mu)/\sigma)$ if $\xi=0$. See Appendix \ref{subsec:remindersGP} for further details.  In GP families, the virtual value has the form $\vValue(t)=c_\vValue(t-r^*)$, where $r^*$ is the monopoly price and $c_\vValue=1-\xi$.

\begin{lemma}\label{lemma:variousGPComps}
Suppose $Y$ has a Generalized Pareto distribution. Call $F_Y$ the cdf of $Y$ and $f_Y$ its density. 

If $V=\vValue_Y(Y)$, where $\vValue_Y$ is the virtual value of $Y$, we have $F_V(t)=F_Y(\vValue_Y^{-1}(t))$ and 
$$
\frac{F_V(t)}{f_V(t)}=c_\vValue \frac{F_Y(\vValue_Y^{-1}(t))}{f_Y(\vValue_Y^{-1}(t))}\;,
$$
where $c_{\vValue_Y}=\vValue_Y'(t)$. If $r^*_Y$ is the monopoly price associated with $Y$, we more specifically have 
$$
\vValue_Y^{-1}(t) = \frac{t}{c_{\vValue_Y}}+r^*_Y\;, \text{ and }\frac{F_V(t)}{f_V(t)}=c_{\vValue_Y} \frac{F_Y(t/c_{\vValue_Y}+r^*_Y)}{f_Y(t/c_{\vValue_Y}+r^*_Y)}\;.
$$
\end{lemma}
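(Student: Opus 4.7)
The plan is to treat the statement as a direct change-of-variables computation, using the two defining features of the generalized Pareto family that the authors recall just before the lemma: that the virtual value map $\varphi_Y$ is strictly increasing and \emph{affine}, with slope $c_{\varphi_Y}=1-\xi$ and monopoly price $r^*_Y$, so that $\varphi_Y(s)=c_{\varphi_Y}(s-r^*_Y)$.

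First I would establish the cdf and density of $V=\varphi_Y(Y)$. Because $\varphi_Y$ is strictly increasing, a standard monotone change of variable gives
\begin{equation*}
F_V(t)=P\bigl(\varphi_Y(Y)\leq t\bigr)=P\bigl(Y\leq \varphi_Y^{-1}(t)\bigr)=F_Y\bigl(\varphi_Y^{-1}(t)\bigr),
\end{equation*}
and then, differentiating,
\begin{equation*}
f_V(t)=\frac{f_Y\bigl(\varphi_Y^{-1}(t)\bigr)}{\varphi_Y'\bigl(\varphi_Y^{-1}(t)\bigr)}.
\end{equation*}
Dividing the first identity by the second yields the general formula
\begin{equation*}
\frac{F_V(t)}{f_V(t)}=\varphi_Y'\bigl(\varphi_Y^{-1}(t)\bigr)\,\frac{F_Y\bigl(\varphi_Y^{-1}(t)\bigr)}{f_Y\bigl(\varphi_Y^{-1}(t)\bigr)},
\end{equation*}
valid on the support of $V$.

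Second, I would specialize to the GP case. Here affineness of $\varphi_Y$ collapses $\varphi_Y'(\varphi_Y^{-1}(t))$ to the constant $c_{\varphi_Y}$, giving the first displayed formula of the lemma. Inverting $\varphi_Y(s)=c_{\varphi_Y}(s-r^*_Y)$ explicitly gives $\varphi_Y^{-1}(t)=t/c_{\varphi_Y}+r^*_Y$, and substituting this expression for $\varphi_Y^{-1}(t)$ into the general formula yields the second displayed identity
\begin{equation*}
\frac{F_V(t)}{f_V(t)}=c_{\varphi_Y}\,\frac{F_Y\bigl(t/c_{\varphi_Y}+r^*_Y\bigr)}{f_Y\bigl(t/c_{\varphi_Y}+r^*_Y\bigr)}.
\end{equation*}

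There is essentially no obstacle: the only point that requires a sentence of care is the validity of $\varphi_Y^{-1}$, i.e.\ restricting $t$ to the range of $\varphi_Y$ on the support of $Y$ (which is $[-c_{\varphi_Y} r^*_Y,\infty)$ or the corresponding bounded interval when $\xi<0$). On this range the monotone change of variable is unambiguous and both $F_V$ and $f_V$ are smooth, so the division is legal and the claimed identities hold pointwise. The strict positivity of $c_{\varphi_Y}=1-\xi$ (which holds under the usual GP regularity assumption $\xi<1$ ensuring $\varphi_Y$ is increasing) is what licenses factoring out $c_{\varphi_Y}$ without sign issues.
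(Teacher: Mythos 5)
Your proposal is correct and follows essentially the same route as the paper's proof: a monotone change of variables for $F_V$ and $f_V$, followed by using the affineness of $\vValue_Y$ in the GP family to reduce $\vValue_Y'(\vValue_Y^{-1}(t))$ to the constant $c_{\vValue_Y}$ and to invert $\vValue_Y$ explicitly. Your added remark about restricting $t$ to the range of $\vValue_Y$ is a sensible precision that the paper handles in the sentence following the lemma statement.
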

In case $F_Y$ has finite support, we naturally restrict $t$ to values such that $x=\vValue_Y^{-1}(t)$ is just that $f_Y(x)>0$. 
\begin{proof}
$F_V$ is just the cumulative distribution function of $\vValue_Y(Y)$, where $\vValue_Y$ is the virtual value of $Y$. Hence, since in GP families $\vValue_Y$ is increasing, 
$$
F_V(t)=P(V\leq t)=P(\vValue_Y(Y)\leq t)=F_Y(\vValue_Y^{-1}(t))\;.
$$
In particular, 
$$
f_V(t)=\frac{f_Y(\vValue_Y^{-1}(t))}{\vValue_Y'(\vValue_Y^{-1}(t))}\;.
$$
In Generalized Pareto families, $\vValue_Y$ is linear, so that $\vValue_Y'$ is a constant, because $\vValue_Y(t)=c_{\vValue_Y}(t-r^*_Y)$, where $r^*_Y$ is the monopoly price. The first result follows immediately. Noticing that $\vValue_Y^{-1}(x)=x/c_{\vValue_Y}+r^*_Y$ gives the second result.  
\end{proof}
The previous lemma yields the following useful corollary. 
\begin{corollary}
Suppose $K\geq 2$, $Y_2,\ldots,Y_K$ are independent, identically distributed, with Generalized Pareto distribution. Call $\vValue_Y$ their virtual value function and $Z=\max_{2\leq i \leq K}(0,\vValue_Y(Y_i))$. Then, if $F_Z$ is the cumulative distribution function of $Z$, we have 
$$
\frac{F_Z(t)}{f_Z(t)}=\frac{c_{\vValue_Y}}{K-1} \frac{F_Y(t/c_{\vValue_Y}+r^*_Y)}{f_Y(t/c_{\vValue_Y}+r^*_Y)}\;, \text{ for } t>0 \text{ and such that } f_Y(t/c_{\vValue_Y}+r^*_Y)>0\;.
$$
\end{corollary}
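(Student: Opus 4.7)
The plan is to reduce the corollary to Lemma \ref{lemma:variousGPComps} by first computing $F_Z$ and $f_Z$ in terms of the cdf and pdf of a single $V_i = \vValue_Y(Y_i)$, and then dividing.

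First, fix $t>0$. Since $Z=\max_{2\leq i\leq K}(0,\vValue_Y(Y_i))$, and we are evaluating at a strictly positive $t$, the clipping at $0$ plays no role inside the event $\{Z\le t\}$, so
\[
F_Z(t)=P\Bigl(\max_{2\leq i\leq K}\vValue_Y(Y_i)\le t\Bigr)=\prod_{i=2}^K F_{V_i}(t)=F_V(t)^{K-1},
\]
using independence and the fact that the $Y_i$'s (hence the $V_i$'s) are i.i.d. with common cdf $F_V$. Differentiating in $t$ gives
\[
f_Z(t)=(K-1)F_V(t)^{K-2}f_V(t),
\]
valid wherever $f_V(t)$ exists; by the chain rule calculation recorded in the proof of Lemma \ref{lemma:variousGPComps}, this requires $f_Y(\vValue_Y^{-1}(t))=f_Y(t/c_{\vValue_Y}+r^*_Y)>0$, which is exactly the hypothesis on $t$.

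Taking the ratio, the factor $F_V(t)^{K-2}$ cancels and one obtains
\[
\frac{F_Z(t)}{f_Z(t)}=\frac{1}{K-1}\,\frac{F_V(t)}{f_V(t)}.
\]
Now invoke Lemma \ref{lemma:variousGPComps}, which for Generalized Pareto $Y$ (so that the virtual value is linear with slope $c_{\vValue_Y}$ and monopoly price $r^*_Y$) yields
\[
\frac{F_V(t)}{f_V(t)}=c_{\vValue_Y}\,\frac{F_Y(t/c_{\vValue_Y}+r^*_Y)}{f_Y(t/c_{\vValue_Y}+r^*_Y)}.
\]
Combining the two displays gives the announced identity. There is no real obstacle here: the only slightly delicate point is remembering that the formula for $F_Z$ as a simple product $F_V^{K-1}$ holds because $t>0$ removes the atom that $Z$ might otherwise carry at $0$ (coming from the event that no $V_i$ is positive). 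Once this restriction is respected, everything else is a routine application of Lemma \ref{lemma:variousGPComps}.
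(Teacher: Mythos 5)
Your proof is correct and follows exactly the route the paper intends: the paper states this corollary without a written proof, as an immediate consequence of Lemma \ref{lemma:variousGPComps} combined with the product formula $F_Z(t)=\prod_{i=2}^K F_{V_i}(t)$ for $t>0$ (which the paper records earlier in Subsection \ref{subsec:optMyersonPbFormulation}). Your handling of the $t>0$ restriction (removing the atom at $0$) and of the positivity condition on $f_Y$ is exactly the right bookkeeping.
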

\subsubsection{An example: uniform non-strategic bidders}
In this subsection we assume that bidder 1 is facing $K-1$ other bidders, with values $Y_i$'s that are i.i.d $Unif[0,1]$. In this case, $c_{\vValue_Y}=2$ and $r^*_Y=1/2$ so $F_Z(t)=\min(1,[(t+1)/2]^{K-1})$ for $t>0$.   
We recall that the $Unif[0,1]$ distribution is GP(0,1,-1). Bidder 1 is strategic whereas bidders 2 to $K$ are not and bid truthfully. 

\begin{lemma}[Shading against $(K-1)$ uniform bidders]\label{lemma:shadingAgainstUnifBidders}
Suppose that $X_1$ has a density that is positive on its support. We assume for simplicity that $X_1$ is bounded by $(K+1)/(K-1)$. Let $\eps>0$ be chosen by bidder 1 arbitrarily close to 0. 
Let us call 
$$
h^{(\eps)}_K(x)=
\begin{cases}
\frac{K-1}{K} \frac{\eps}{1+\eps} x & \text{ if } x \in [0,(1+\eps)/(K-1)) \;,\\
\frac{K-1}{K}\left(x-\frac{1}{K-1}\right) & \text{ if } x\geq (1+\eps)/(K-1)\;.
\end{cases}
$$
A near-optimal shading strategy is for bidder 1 to shade her value through
$$
\shadingFunc^{(\eps)}_1(x_1)=\Exp{h^{(\eps)}_K(X_1)|X_1\geq x_1}\;.
$$
As $\eps$ goes to $0^+$, this strategy approaches the optimum. 

If the support of $X_1$ is within $(1/(K-1),(K+1)/(K-1))$, then $\eps$ can be taken equal to 0.
\end{lemma}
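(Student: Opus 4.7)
The plan is to use Lemma~\ref{lemma:makingSureGIncreasing} as a change of variables: for any non-decreasing $h\geq 0$ on the support of $X_1$, the shading $\shadingFunc(x)=\Exp{h(X_1)|X_1\geq x}$ is a valid, increasing, differentiable strategy with $\vValue_B(\shadingFunc(x))=h(x)$. Consequently the payoff $\Pi(\shadingFunc) = \Exp{(X_1-h(X_1))\,F_Z(h(X_1))}$ reparametrises as a functional of $h$, and I would look for $h$ that maximises this pointwise-in-$x$, subject only to the monotonicity constraint.

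The next step is to make $F_Z$ explicit. With $Y_2,\dots,Y_K$ i.i.d.\ $\Unif[0,1]$ we have $\vValue_Y(y)=2y-1$, $c_{\vValue_Y}=2$, $r^*_Y=1/2$, so $F_Z(t)=((t+1)/2)^{K-1}$ for $t\in[0,1]$, together with the jump $F_Z(0)=(1/2)^{K-1}$ produced by the event that all competitors fail to clear their monopoly reserve. The corollary following Lemma~\ref{lemma:variousGPComps} gives $F_Z(t)/f_Z(t)=(t+1)/(K-1)$ for $t>0$. On the set $\{h(x)>0\}$, the first-order condition inherited from \eqref{eq:directionalDerivativePayoff}, namely $x-h(x)=F_Z(h(x))/f_Z(h(x))$, becomes $x-h(x)=(h(x)+1)/(K-1)$, whose unique solution is the linear candidate $h_K^\star(x)=\frac{K-1}{K}(x-\frac{1}{K-1})$. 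It is non-negative exactly on $[1/(K-1),\infty)$ and reaches $1$ at $x=(K+1)/(K-1)$, past which $F_Z\equiv 1$ and the problem saturates; restricting the support to $X_1\leq(K+1)/(K-1)$ rules out this saturation.

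On the region $x<1/(K-1)$, the linear candidate would give $h<0$, which is both incompatible with $h$ being a virtual value and, combined with the monotonicity constraint, infeasible. Pointwise subject to $h\geq 0$, the payoff $(x-h)F_Z(h)$ is maximised by taking $h$ as small as possible, yielding the limiting value $x\cdot(1/2)^{K-1}$ at $h=0$. However, the choice $h\equiv 0$ on $[0,1/(K-1)]$ makes $\vValue_B$ vanish on an interval of bids, a degeneracy that interacts badly with the jump of $F_Z$ at $0$: the machinery in \eqref{eq:directionalDerivativePayoff} (and the ``no-ironing'' regularity invoked in Theorem~\ref{thm:shadingInSymmMyerson}) hinges on $\vValue_B$ being strictly increasing at the threshold. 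The role of $\eps>0$ is precisely to restore strict monotonicity: on $[0,(1+\eps)/(K-1))$, $h_K^{(\eps)}$ uses the small positive slope $\frac{K-1}{K}\frac{\eps}{1+\eps}$, which glues continuously with $h_K^\star$ at $x=(1+\eps)/(K-1)$ (both sides equal $\eps/K$), so Lemma~\ref{lemma:makingSureGIncreasing} produces a strictly increasing valid shading $\shadingFunc_1^{(\eps)}$.

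Finally, to conclude near-optimality I would upper-bound $\sup_{\shadingFunc}\Pi(\shadingFunc)$ by the pointwise upper envelope $\Exp{\max(X_1 F_Z(0),\;\max_{h>0}(X_1-h)F_Z(h))}$ and show that $\Pi(\shadingFunc_1^{(\eps)})$ attains it as $\eps\to 0^+$. On $[(1+\eps)/(K-1),(K+1)/(K-1)]$ the integrand already equals the pointwise maximum $(x-h_K^\star(x))F_Z(h_K^\star(x))$ by construction; on $[0,(1+\eps)/(K-1))$, $h_K^{(\eps)}(x)\in[0,\eps/K]$ so $(x-h_K^{(\eps)}(x))F_Z(h_K^{(\eps)}(x))\to x\cdot(1/2)^{K-1}$ uniformly as $\eps\downarrow 0$. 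The bounded-support assumption then yields convergence by dominated convergence. If the support of $X_1$ lies in $(1/(K-1),(K+1)/(K-1))$, the linear $h_K^\star$ is positive and strictly increasing on the entire support, so $\eps=0$ attains the optimum exactly. The main obstacle is the coupling between the jump of $F_Z$ at $0$ and the monotonicity constraint on $h$: this is what prevents the pointwise optimum from being realised by a single shading function and forces the two-piece $\eps$-regularisation together with a limiting argument.
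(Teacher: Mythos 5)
Your proposal is correct and follows essentially the same route as the paper: reparametrize the payoff as a functional of the virtualized bid $h=\vValue_B\circ\shadingFunc$ via Lemmas \ref{lemma:keyODEs} and \ref{lemma:makingSureGIncreasing}, maximize $(x-c)(c+1)^{K-1}$ pointwise over $c\geq 0$ to get the two-regime optimum, and use the $\eps$-regularization to restore strict monotonicity near the kink. Your explicit upper-envelope and dominated-convergence argument for the limit $\eps\to 0^+$ is a welcome elaboration of a step the paper only asserts, but it is not a different method.
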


\begin{proof}
If we call $h(x)=\psi_B(\shadingFunc(x))$ we can in this case write bidder 1's expected payoff directly using the results of the previous subsection: 
$$
\Pi(\shadingFunc)=\int_{x:h(x)>0} (x-h(x))\min\left(1,\frac{[h(x)+1]^{K-1}}{2^{K-1}}\right) f_1(x) dx\;.
$$
In light of the fact we want to maximize this integral as a function of $h$, with the requirement that $h> 0$, it is natural to study the function $f_x(c)=(x-c)[c+1]^{K-1}$. 

If we call $h_K(x)=\argmax_{c\geq 0} f_x(c)$, we can split the problem into two cases. If $x>1/(K-1)$, $h_K(x)=\frac{K-1}{K}\left(x-\frac{1}{K-1}\right)$. Note that with our assumption that $x\leq (K+1)/(K-1)$, $h_K(x)\leq 1$. For $x<1/(K-1)$, the function $f_x(\cdot)$ is decreasing for $c\geq 0$. Hence, 
$$
h_K(x)=\argmax_{c\geq 0}(x-c)[c+1]^{K-1}=
\begin{cases}
0 & \text{ if } x\leq 1/(K-1)\;,\\
\frac{K-1}{K}\left(x-\frac{1}{K-1}\right) & \text{ if } x>1/(K-1) \;.
\end{cases}
$$
Recall that for Lemma \ref{lemma:makingSureGIncreasing} to apply, we need to integrate an increasing function and $h_K$ is not increasing on $(0,\infty)$. 

However, bidder 1 can use the following $\eps$-approximation strategy: let us call 
$$
h^{(\eps)}_K(x)=
\begin{cases}
\frac{K-1}{K} \frac{\eps}{1+\eps} x & \text{ if } x \in [0,(1+\eps)/(K-1)) \;,\\
\frac{K-1}{K}\left(x-\frac{1}{K-1}\right) & \text{ if } x\geq (1+\eps)/(K-1)\;.
\end{cases}
$$
Notice that $\sup_x |h^{(\eps)}_K(x)-h_K(x)|<\eps/K$. In light of Lemmas \ref{lemma:keyODEs} and \ref{lemma:makingSureGIncreasing}, the corresponding  function
$$
\shadingFunc^{(\eps)}_1(x_1)=\Exp{h^{(\eps)}_K(X_1)|X_1\geq x_1}
$$
is increasing and will then guarantee an expected payoff that is nearly optimal since it will be 
$$
\Pi(\shadingFunc^{(\eps)})=\frac{1}{2^{K-1}}\int (x-h_K^{(\eps)}(x))[h_K^{\eps)}(x)+1]^{K-1} f_1(x) dx\;.
$$
It can be made arbitrarily close to optimal by decreasing $\eps$. Using $\eps>0$ guarantees that the virtualized bid $h^{(\eps)}_K(x)$ is always strictly positive and hence effectively sends the monopoly price for bidder 1 to 0. (Even if $\shadingFunc^{(0)}$ is increasing, a potential problem might occur if the virtualized bid is exactly zero. Using $\shadingFunc^{(\eps)}$ with $\eps=0^+$ solves that problem.)

If $X_1$ is supported on a subset of $[1/(K-1),(K+1)/(K-1))$, taking $\eps=0$ is possible and optimal. 
\end{proof}
The assumption that $X_1\leq (K+1)/(K-1)$ can easily be dispensed of as the proof makes clear : one simply needs to look for the argmax of another function. Our main example follows and does not require taking care of this minor technical problem. 
\paragraph{\textbf{Case where bidder 1 has value distribution $\Unif[0,1]$}} We first note that $X_1\leq 1\leq (K+1)/(K-1)$, so Lemma \ref{lemma:shadingAgainstUnifBidders} applies as-is. We therefore have 
$$
\shadingFunc^{(\eps)}_1(x)=
\begin{cases}
\frac{K-1}{K}[\frac{1}{2}(1+x)-\frac{1}{K-1}] & \text{ if } 1\geq x\geq x_{\eps}=\frac{1+\eps}{K-1}\;,\\
\frac{K-1}{K} \frac{1}{1-x}\left(\frac{\eps}{1+\eps} \frac{1}{2} (x_\eps^2-x^2)+\shadingFunc^{(\eps)}_1(x_\eps) (1-x_\eps)\right) & \text{ if } x<\frac{1+\eps}{K-1}\;.
\end{cases}
$$
Taking $\eps$ to 0 yields
$$
\shadingFunc_1(x)=
\begin{cases}
\frac{K-1}{K}[\frac{1}{2}(1+x)-\frac{1}{K-1}] & \text{ if } x\geq \frac{1}{K-1}\;,\\
\frac{1}{1-x} \frac{(K-2)^2}{2(K-1)K} & \text{ if } x<\frac{1}{K-1}\;.
\end{cases}
$$
See Figure \ref{fig:uniform_bid_profiles} for a plot of $\shadingFunc_1$ and comparison to other possible shading strategies.  

Similar computations can be carried out if $X_1$ has another GP distribution. For those distributions, the shading beyond $1/(K-1)$ is also affine in the value of bidder 1, $x_1$. (See Appendix \ref{app:tediousCompGPs} for relevant details.) Interestingly, it is easy to verify that affine transformations of GP random variables are GP. However, if the support of $X_1$ includes part of $(0,1/(K-1))$, $\shadingFunc_1(X_1)$ will not have a GP distribution in general.

\begin{figure}
	\includegraphics[width=0.8\textwidth]{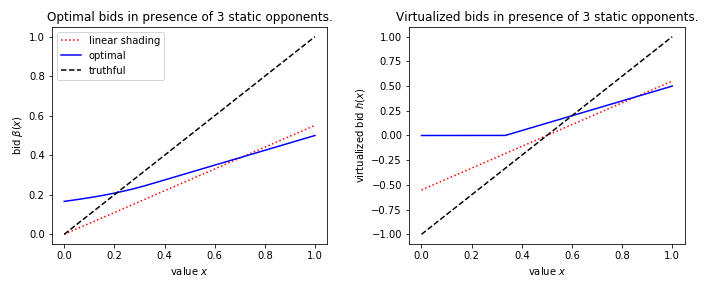}
	\caption{\textbf{Myerson auction: Bids and virtualized bids with one strategic bidder} There are K=4 bidders, only one of them is strategic. On the left hand side, we present a plot of the bids sent to the seller. ``Linear shading" corresponding to a bid $\shadingFunc_\alpha(x)=\alpha x$, where $x$ is the value of bidder 1; here $\alpha$ is chosen numerically to maximize that buyer's payoff - see Lemma \ref{lemma:linearShadingInMyerson}. ``Optimal" corresponds to the strategy described in Lemma \ref{lemma:shadingAgainstUnifBidders}, with $\eps=0^+$. On the right hand side (RHS), we present the virtualized bids, i.e. the value taken by the associated virtual value functions evaluated at the bids sent to the seller. This corresponds on average to what the buyer is paying in the Myerson auction, when those virtualized bids clear 0. We can interpret the RHS figure as showing that for both optimal and linear shading,  a strategic buyer end up winning more often and paying less (conditional on the fact that she won) than if she had been truthful; this explains why her average payoff is higher than with truthful bidding.}
\label{fig:uniform_bid_profiles}	
\end{figure}

\subsubsection{Boosted second price auctions: shading through GP families }
Motivated by boosted second price auctions \cite{Golrezaei2017}, we study the question of shading one's bid in a way that guarantees that the distribution of bids sent to the seller is in the GP family. In this case, boosted second price auctions turn into Myerson auctions.   

It seems natural that a strategic bidder facing a boosted second price auction would shade her bid by sending bids with distribution in the Generalized Pareto family\footnote{we discuss here only deterministic shadings into the GP families; stochastic shadings are of course possible by relying on copulas. See \cite{NelsenCopulas2006} for a review of those tools.} since this would facilitate the computation of her expected payoff and also limit the uncertainty appearing otherwise because of the implementation of the linearization of her virtual value function (see \cite{Golrezaei2017} for description of that procedure) for non-GP random variables.

If bidder 1 has value $X_1$, a random variable whose cumulative distribution function is $F_1$, which we assume to be a continuous function. Then $U_1=1-F_1(X_1)$ has a uniform distribution. Let us call 
\begin{align*}
\mathfrak{p}&=(\mu,\sigma,\xi)\;, \text{ and } \\
X_{\mathfrak{p}}&=\frac{\sigma}{\xi}\left[(1-F_1(X_1))^{-\xi}-1\right]+\mu\;, \;
x_{\mathfrak{p}}=\frac{\sigma}{\xi}\left[(1-F_1(x_1))^{-\xi}-1\right]+\mu\;.
\end{align*}
It is clear (see Appendix \ref{subsec:remindersGP}) that $X_{\mathfrak{p}}$ has GP($\mu,\sigma,\xi$) distribution; furthermore $x_{\mathfrak{p}}$ is an increasing function of $x_1$. Bidder 1's payoff in a boosted second price auction is
$$
\Pi(\mathfrak{p})=\Exp{(X_1-\vValue_\mathfrak{p}(X_\mathfrak{p}))F_Z(\vValue_\mathfrak{p}(X_\mathfrak{p}))}
\text{ and she seeks }
\mathfrak{p}^*=\argmax_{\mathfrak{p}}\Pi(\mathfrak{p})\;.
$$
This is a 3-parameter optimization problem. It is in general non-convex but numerical optimization methods could nonetheless be used by a strategic bidder to increase her expected payoff. See Subsection \ref{subsec:bspaGradComps} for further details. 
\subsection{Beyond the Myerson auction and future work}
The key element in getting the main results of Subsection \ref{subsec:shadingInSymmMyerson} was a representation of the bidder's expected payoff as a function of her value, the shading function she used, and the distribution of the competition she faced in terms of virtual bids. 

It seems possible to obtain such a representation for other types of auctions, such as certain second price auctions with personalized reserve prices (see \cite{FieldGuideToPersonalizedReservePrices2016}).  The technique we have used in Sections \ref{sec:linear_shading} and \ref{sec:myerson} is simply to ``integrate out" the competition so as to be able to formulate a functional optimization problem in terms of a single bidder. 
We leave the details of this and related problems to future work. 

Finally, another potentially more practical strategy would be to shade bids infinitesimally and to adjust this infinitesimal shading to reactions of the other bidders and the seller. While connected to the main theme of our paper, this approach raises a number of questions that are quite different in nature from those that form the core of this work and we plan to present our work on infinitesimal shading in another piece of work.

\section{Conclusion}
\label{sec:conclusion}
We have shown in this paper that revenue optimization on the seller side turns a number of classical auction formats that are widely purported to be truthful into auctions that are not incentive compatible. This is especially relevant in the context of repeated auctions and Internet advertising. 

Bidders' shading strategies can be as simple as linearly shading bids. This already results in increased payoffs for them at both ends of the complexity spectrum for auctions, from certain second price auctions to the Myerson auction. For this latter auction, we also find that it is a symmetric equilibrium for bidders to shade their bids in a non-linear way resulting in the same expected payoff for them as in a first price auction with no reserve price. The tools we develop in the paper also allow us to exhibit non-linear shading strategies when only one buyer is strategic while the others bid truthfully. 

This suggests that, counter-intuitively, adopting revenue-optimizing auctions may not bring more revenue to the seller. It does however turn otherwise simple means of exchange of goods into quite opaque ones. Our work gives theoretical grounding to the oft-heard practitioners' call for return to simple auctions, in particular in the world of online advertising auctions. 

\section{Aknowledgements}
N. El Karoui gratefully acknowledges support from grant NSF DMS 1510172. V. Perchet has benefited from the support of the FMJH Program Gaspard Monge in optimization and operations research (supported in part by EDF), from the Labex LMH and from the CNRS through the PEPS program.

\bibliographystyle{ACM-Reference-Format}
\bibliography{main}


\begin{thebibliography}{00}


\ifx \showCODEN    \undefined \def \showCODEN     #1{\unskip}     \fi
\ifx \showDOI      \undefined \def \showDOI       #1{{\tt DOI:}\penalty0{#1}\ }
  \fi
\ifx \showISBNx    \undefined \def \showISBNx     #1{\unskip}     \fi
\ifx \showISBNxiii \undefined \def \showISBNxiii  #1{\unskip}     \fi
\ifx \showISSN     \undefined \def \showISSN      #1{\unskip}     \fi
\ifx \showLCCN     \undefined \def \showLCCN      #1{\unskip}     \fi
\ifx \shownote     \undefined \def \shownote      #1{#1}          \fi
\ifx \showarticletitle \undefined \def \showarticletitle #1{#1}   \fi
\ifx \showURL      \undefined \def \showURL       #1{#1}          \fi
\providecommand\bibfield[2]{#2}
\providecommand\bibinfo[2]{#2}
\providecommand\natexlab[1]{#1}
\providecommand\showeprint[2][]{arXiv:#2}

\bibitem[\protect\citeauthoryear{Allouah and Besbes}{Allouah and
  Besbes}{2017}]%
        {AllouahBesbes2017}
\bibfield{author}{\bibinfo{person}{A. Allouah} {and} \bibinfo{person}{O.
  Besbes}.} \bibinfo{year}{2017}\natexlab{}.
\newblock \showarticletitle{Auctions in the Online Display Advertising Chain: A
  Case for Independent Campaign Management}.
\newblock \bibinfo{journal}{{\em Columbia Business School Research Paper No.
  17-60\/}} (\bibinfo{year}{2017}).
\newblock


\bibitem[\protect\citeauthoryear{Amin, Cummings, Dworkin, Kearns, and
  Roth}{Amin et~al\mbox{.}}{2015}]%
        {AmiCumDwo15}
\bibfield{author}{\bibinfo{person}{K. Amin}, \bibinfo{person}{R. Cummings},
  \bibinfo{person}{L. Dworkin}, \bibinfo{person}{M. Kearns}, {and}
  \bibinfo{person}{A. Roth}.} \bibinfo{year}{2015}\natexlab{}.
\newblock \showarticletitle{Online learning and profit maximization from
  revealed preferences}.
\newblock  (\bibinfo{year}{2015}), \bibinfo{pages}{770--776}.
\newblock


\bibitem[\protect\citeauthoryear{Amin, Kearns, Key, and Schwaighofer}{Amin
  et~al\mbox{.}}{2012}]%
        {AmiKeaKey12}
\bibfield{author}{\bibinfo{person}{K. Amin}, \bibinfo{person}{M. Kearns},
  \bibinfo{person}{P. Key}, {and} \bibinfo{person}{A. Schwaighofer}.}
  \bibinfo{year}{2012}\natexlab{}.
\newblock \showarticletitle{Budget optimization for sponsored search: Censored
  learning in {MDP}s.}
\newblock  (\bibinfo{year}{2012}), \bibinfo{pages}{54--63}.
\newblock


\bibitem[\protect\citeauthoryear{Amin, Rostamizadeh, and Syed}{Amin
  et~al\mbox{.}}{2014}]%
        {AmiRosSye14}
\bibfield{author}{\bibinfo{person}{K. Amin}, \bibinfo{person}{A. Rostamizadeh},
  {and} \bibinfo{person}{U. Syed}.} \bibinfo{year}{2014}\natexlab{}.
\newblock \showarticletitle{Repeated contextual auctions with strategic
  buyers}.
\newblock  (\bibinfo{year}{2014}), \bibinfo{pages}{622--630}.
\newblock


\bibitem[\protect\citeauthoryear{Balseiro, Candogan, and Gurkan}{Balseiro
  et~al\mbox{.}}{2015}]%
        {Balseiro2015MultiStage}
\bibfield{author}{\bibinfo{person}{S. Balseiro}, \bibinfo{person}{O. Candogan},
  {and} \bibinfo{person}{H. Gurkan}.} \bibinfo{year}{2015}\natexlab{}.
\newblock \showarticletitle{Multi-Stage Intermediation in Online Internet
  Advertising}.
\newblock  (\bibinfo{year}{2015}).
\newblock


\bibitem[\protect\citeauthoryear{Blum, Kumar, Rudra, and Wu}{Blum
  et~al\mbox{.}}{2003}]%
        {BluKumRud03}
\bibfield{author}{\bibinfo{person}{A. Blum}, \bibinfo{person}{V. Kumar},
  \bibinfo{person}{A. Rudra}, {and} \bibinfo{person}{F. Wu}.}
  \bibinfo{year}{2003}\natexlab{}.
\newblock \showarticletitle{Online learning in online auctions}.
\newblock  (\bibinfo{year}{2003}), \bibinfo{pages}{202--204}.
\newblock


\bibitem[\protect\citeauthoryear{Blum, Mansour, and Morgenstern}{Blum
  et~al\mbox{.}}{2015}]%
        {BluManMor15}
\bibfield{author}{\bibinfo{person}{A. Blum}, \bibinfo{person}{Y. Mansour},
  {and} \bibinfo{person}{J. Morgenstern}.} \bibinfo{year}{2015}\natexlab{}.
\newblock \showarticletitle{Learning valuation distributions from partial
  observation}.
\newblock  (\bibinfo{year}{2015}), \bibinfo{pages}{798--804}.
\newblock


\bibitem[\protect\citeauthoryear{Bubeck, Devanur, Huang, and Niazadeh}{Bubeck
  et~al\mbox{.}}{2017}]%
        {BubDev17}
\bibfield{author}{\bibinfo{person}{S. Bubeck}, \bibinfo{person}{N. Devanur},
  \bibinfo{person}{Z. Huang}, {and} \bibinfo{person}{R. Niazadeh}.}
  \bibinfo{year}{2017}\natexlab{}.
\newblock \showarticletitle{{Online Auctions and Multi-scale Online Learning}}.
  In \bibinfo{booktitle}{{\em Proceedings of the 2017 ACM Conference on
  Economics and Computation - EC '17}}.
\newblock


\bibitem[\protect\citeauthoryear{Cesa-Bianchi, Gentile, and
  Mansour}{Cesa-Bianchi et~al\mbox{.}}{2013}]%
        {CesGenMan13}
\bibfield{author}{\bibinfo{person}{N. Cesa-Bianchi}, \bibinfo{person}{C.
  Gentile}, {and} \bibinfo{person}{Y. Mansour}.}
  \bibinfo{year}{2013}\natexlab{}.
\newblock \showarticletitle{Regret minimization for reserve prices in
  second-price auctions}.
\newblock  (\bibinfo{year}{2013}), \bibinfo{pages}{1190--1204}.
\newblock


\bibitem[\protect\citeauthoryear{Chawla, Fu, and Karlin}{Chawla
  et~al\mbox{.}}{2007}]%
        {ChaHarKle07}
\bibfield{author}{\bibinfo{person}{S. Chawla}, \bibinfo{person}{H. Fu}, {and}
  \bibinfo{person}{A. Karlin}.} \bibinfo{year}{2007}\natexlab{}.
\newblock \showarticletitle{{Approximate revenue maximization in interdependent
  value settings}}. In \bibinfo{booktitle}{{\em Proceedings of the 15th ACM
  Conference on Electronic Commerce (EC)}}. \bibinfo{pages}{277--294}.
\newblock


\bibitem[\protect\citeauthoryear{Cole and Roughgarden}{Cole and
  Roughgarden}{2014}]%
        {ColRou14}
\bibfield{author}{\bibinfo{person}{R. Cole} {and} \bibinfo{person}{T.
  Roughgarden}.} \bibinfo{year}{2014}\natexlab{}.
\newblock \showarticletitle{The sample complexity of revenue maximization}.
\newblock  (\bibinfo{year}{2014}), \bibinfo{pages}{243--252}.
\newblock


\bibitem[\protect\citeauthoryear{Daskalakis and Syrgkanis}{Daskalakis and
  Syrgkanis}{2016}]%
        {DasSyr16}
\bibfield{author}{\bibinfo{person}{C. Daskalakis} {and} \bibinfo{person}{V.
  Syrgkanis}.} \bibinfo{year}{2016}\natexlab{}.
\newblock \showarticletitle{{Learning in Auctions: Regret is Hard, Envy is
  Easy}}. In \bibinfo{booktitle}{{\em Proceedings of the 57th Annual Symposium
  on Foundations of Computer Science (FOCS)}}.
\newblock


\bibitem[\protect\citeauthoryear{Dhangwatnotai, Roughgarden, and
  Yan}{Dhangwatnotai et~al\mbox{.}}{2015}]%
        {DhaRouYan15}
\bibfield{author}{\bibinfo{person}{P. Dhangwatnotai}, \bibinfo{person}{T.
  Roughgarden}, {and} \bibinfo{person}{Q. Yan}.}
  \bibinfo{year}{2015}\natexlab{}.
\newblock \showarticletitle{Revenue maximization with a single sample}.
\newblock \bibinfo{journal}{{\em Games Econom. Behav.\/}}  \bibinfo{volume}{91}
  (\bibinfo{year}{2015}), \bibinfo{pages}{318--333}.
\newblock


\bibitem[\protect\citeauthoryear{Fu, Hartline, and Hoy}{Fu
  et~al\mbox{.}}{2013}]%
        {FuHarHoy13}
\bibfield{author}{\bibinfo{person}{H. Fu}, \bibinfo{person}{J. Hartline}, {and}
  \bibinfo{person}{D. Hoy}.} \bibinfo{year}{2013}\natexlab{}.
\newblock \showarticletitle{Prior-independent auctions for risk-averse agents}.
\newblock  (\bibinfo{year}{2013}), \bibinfo{pages}{471--488}.
\newblock


\bibitem[\protect\citeauthoryear{Golrezaei, Lin, Mirrokni, and
  Nazerzadeh}{Golrezaei et~al\mbox{.}}{2017}]%
        {Golrezaei2017}
\bibfield{author}{\bibinfo{person}{N. Golrezaei}, \bibinfo{person}{M. Lin},
  \bibinfo{person}{V. Mirrokni}, {and} \bibinfo{person}{H. Nazerzadeh}.}
  \bibinfo{year}{2017}\natexlab{}.
\newblock \showarticletitle{{Boosted Second-price Auctions for Heterogeneous
  Bidders}}.
\newblock \bibinfo{journal}{{\em SSRN Electronic Journal\/}}
  (\bibinfo{year}{2017}), \bibinfo{pages}{1--41}.
\newblock


\bibitem[\protect\citeauthoryear{Hartline}{Hartline}{2009}]%
        {Hartline2009}
\bibfield{author}{\bibinfo{person}{J. Hartline}.}
  \bibinfo{year}{2009}\natexlab{}.
\newblock \showarticletitle{{Simple versus optimal mechanisms}}.
\newblock \bibinfo{journal}{{\em ACM SIGecom Exchanges\/}} \bibinfo{volume}{8},
  \bibinfo{number}{1} (\bibinfo{year}{2009}), \bibinfo{pages}{1--3}.
\newblock


\bibitem[\protect\citeauthoryear{Hartline and Roughgarden}{Hartline and
  Roughgarden}{2009}]%
        {HarRou09}
\bibfield{author}{\bibinfo{person}{J. Hartline} {and} \bibinfo{person}{T.
  Roughgarden}.} \bibinfo{year}{2009}\natexlab{}.
\newblock \showarticletitle{Simple versus optimal mechanisms}.
\newblock \bibinfo{journal}{{\em SIGecom Exch.\/}}  \bibinfo{volume}{8}
  (\bibinfo{year}{2009}), \bibinfo{pages}{1--5}.
\newblock


\bibitem[\protect\citeauthoryear{Kanoria and Nazerzadeh}{Kanoria and
  Nazerzadeh}{2014}]%
        {KanNaz14}
\bibfield{author}{\bibinfo{person}{Y. Kanoria} {and} \bibinfo{person}{H.
  Nazerzadeh}.} \bibinfo{year}{2014}\natexlab{}.
\newblock \showarticletitle{Dynamic reserve prices for repeated auctions:
  Learning from bids}.
\newblock  (\bibinfo{year}{2014}), \bibinfo{pages}{232--232}.
\newblock


\bibitem[\protect\citeauthoryear{Kleinberg and Leighton}{Kleinberg and
  Leighton}{2003}]%
        {KleLei03}
\bibfield{author}{\bibinfo{person}{R. Kleinberg} {and} \bibinfo{person}{T.
  Leighton}.} \bibinfo{year}{2003}\natexlab{}.
\newblock \showarticletitle{The value of knowing a demand curve: Bounds on
  regret for online posted-price auctions}.
\newblock  (\bibinfo{year}{2003}).
\newblock


\bibitem[\protect\citeauthoryear{Krishna}{Krishna}{2009}]%
        {krishna2009auction}
\bibfield{author}{\bibinfo{person}{V. Krishna}.}
  \bibinfo{year}{2009}\natexlab{}.
\newblock \bibinfo{booktitle}{{\em Auction Theory}}.
\newblock \bibinfo{publisher}{Elsevier Science}.
\newblock
\showISBNx{9780080922935}


\bibitem[\protect\citeauthoryear{McAfee}{McAfee}{2011}]%
        {McA11}
\bibfield{author}{\bibinfo{person}{R. McAfee}.}
  \bibinfo{year}{2011}\natexlab{}.
\newblock \showarticletitle{The design of advertising exchanges}.
\newblock \bibinfo{journal}{{\em Review of Industrial Organization\/}}
  \bibinfo{volume}{39} (\bibinfo{year}{2011}), \bibinfo{pages}{169--185}.
\newblock


\bibitem[\protect\citeauthoryear{Medina and Mohri}{Medina and Mohri}{2014}]%
        {MohMed14}
\bibfield{author}{\bibinfo{person}{A.-M. Medina} {and} \bibinfo{person}{M.
  Mohri}.} \bibinfo{year}{2014}\natexlab{}.
\newblock \showarticletitle{Learning theory and algorithms for revenue
  optimization in second price auctions with reserve}.
\newblock  (\bibinfo{year}{2014}), \bibinfo{pages}{262--270}.
\newblock


\bibitem[\protect\citeauthoryear{Morgenstern and Roughgarden}{Morgenstern and
  Roughgarden}{2016}]%
        {MorgenRough2016}
\bibfield{author}{\bibinfo{person}{J. Morgenstern} {and} \bibinfo{person}{T.
  Roughgarden}.} \bibinfo{year}{2016}\natexlab{}.
\newblock \showarticletitle{{Learning Simple Auctions }}. In
  \bibinfo{booktitle}{{\em Proceedings of the 2016 Conference On Learning
  Theory - COLT '16}}. \bibinfo{pages}{1--21}.
\newblock


\bibitem[\protect\citeauthoryear{Muthukrishnan}{Muthukrishnan}{2009}]%
        {Mut09}
\bibfield{author}{\bibinfo{person}{S. Muthukrishnan}.}
  \bibinfo{year}{2009}\natexlab{}.
\newblock \showarticletitle{Ad exchanges: Research issues}.
\newblock  (\bibinfo{year}{2009}), \bibinfo{pages}{1--12}.
\newblock


\bibitem[\protect\citeauthoryear{Myerson}{Myerson}{1981}]%
        {Myerson81}
\bibfield{author}{\bibinfo{person}{R.~B. Myerson}.}
  \bibinfo{year}{1981}\natexlab{}.
\newblock \showarticletitle{Optimal Auction Design}.
\newblock \bibinfo{journal}{{\em Math. Oper. Res.\/}} \bibinfo{volume}{6},
  \bibinfo{number}{1} (\bibinfo{year}{1981}), \bibinfo{pages}{58--73}.
\newblock


\bibitem[\protect\citeauthoryear{Nelsen}{Nelsen}{2006}]%
        {NelsenCopulas2006}
\bibfield{author}{\bibinfo{person}{R.~B. Nelsen}.}
  \bibinfo{year}{2006}\natexlab{}.
\newblock \bibinfo{booktitle}{{\em An Introduction to Copulas}}.
\newblock \bibinfo{publisher}{Springer-Verlag New York, Inc.},
  \bibinfo{address}{Secaucus, NJ, USA}.
\newblock
\showISBNx{0387286594}


\bibitem[\protect\citeauthoryear{Ostrovsky and Schwarz}{Ostrovsky and
  Schwarz}{2011}]%
        {OstSch11}
\bibfield{author}{\bibinfo{person}{M. Ostrovsky} {and} \bibinfo{person}{M.
  Schwarz}.} \bibinfo{year}{2011}\natexlab{}.
\newblock \showarticletitle{Reserve prices in internet advertising auctions: A
  field experiment}.
\newblock  (\bibinfo{year}{2011}), \bibinfo{pages}{59--60}.
\newblock


\bibitem[\protect\citeauthoryear{Paes~Leme, Pal, and Vassilvitskii}{Paes~Leme
  et~al\mbox{.}}{2016}]%
        {FieldGuideToPersonalizedReservePrices2016}
\bibfield{author}{\bibinfo{person}{R. Paes~Leme}, \bibinfo{person}{M. Pal},
  {and} \bibinfo{person}{S. Vassilvitskii}.} \bibinfo{year}{2016}\natexlab{}.
\newblock \showarticletitle{A Field Guide to Personalized Reserve Prices}. In
  \bibinfo{booktitle}{{\em Proceedings of the 25th International Conference on
  World Wide Web}} {\em (\bibinfo{series}{WWW '16})}.
  \bibinfo{pages}{1093--1102}.
\newblock


\bibitem[\protect\citeauthoryear{Riley and Samuelson}{Riley and
  Samuelson}{1981}]%
        {RilSam81}
\bibfield{author}{\bibinfo{person}{J. Riley} {and} \bibinfo{person}{W.
  Samuelson}.} \bibinfo{year}{1981}\natexlab{}.
\newblock \showarticletitle{Optimal auctions}.
\newblock \bibinfo{journal}{{\em American Economic Review\/}}
  \bibinfo{volume}{71} (\bibinfo{year}{1981}), \bibinfo{pages}{381--92}.
\newblock


\bibitem[\protect\citeauthoryear{Roughgarden}{Roughgarden}{2016}]%
        {RoughgardenTLA2016}
\bibfield{author}{\bibinfo{person}{T. Roughgarden}.}
  \bibinfo{year}{2016}\natexlab{}.
\newblock \bibinfo{booktitle}{{\em Twenty Lectures on Algorithmic Game
  Theory}}.
\newblock \bibinfo{publisher}{Cambridge University Press},
  \bibinfo{address}{New York, NY, USA}.
\newblock
\showISBNx{131662479X, 9781316624791}


\bibitem[\protect\citeauthoryear{Roughgarden and Wang}{Roughgarden and
  Wang}{2016}]%
        {Roughgarden2016}
\bibfield{author}{\bibinfo{person}{T. Roughgarden} {and} \bibinfo{person}{J.~R.
  Wang}.} \bibinfo{year}{2016}\natexlab{}.
\newblock \showarticletitle{{Minimizing Regret with Multiple Reserves}}. In
  \bibinfo{booktitle}{{\em Proceedings of the 2016 ACM Conference on Economics
  and Computation - EC '16}}, Vol.~\bibinfo{volume}{9}.
  \bibinfo{pages}{601--616}.
\newblock


\bibitem[\protect\citeauthoryear{Sluis}{Sluis}{2017}]%
        {first_price_move}
\bibfield{author}{\bibinfo{person}{S. Sluis}.} \bibinfo{year}{2017}\natexlab{}.
\newblock \bibinfo{title}{Big Changes Coming To Auctions, As Exchanges Roll The
  Dice On First-Price}.
\newblock   (\bibinfo{year}{2017}).
\newblock


\bibitem[\protect\citeauthoryear{Toikka}{Toikka}{2011}]%
        {Toikka2011}
\bibfield{author}{\bibinfo{person}{J. Toikka}.}
  \bibinfo{year}{2011}\natexlab{}.
\newblock \showarticletitle{Ironing without control}.
\newblock \bibinfo{journal}{{\em Journal of Economic Theory\/}}
  \bibinfo{volume}{146} (\bibinfo{year}{2011}), \bibinfo{pages}{2510–2526}.
\newblock


\bibitem[\protect\citeauthoryear{Weed, Perchet, and Rigollet}{Weed
  et~al\mbox{.}}{2016}]%
        {Weed16}
\bibfield{author}{\bibinfo{person}{J. Weed}, \bibinfo{person}{V. Perchet},
  {and} \bibinfo{person}{P. Rigollet}.} \bibinfo{year}{2016}\natexlab{}.
\newblock \showarticletitle{{Online learning in repeated auctions }}. In
  \bibinfo{booktitle}{{\em Proceedings of the 2016 Conference On Learning
  Theory - COLT '16}}. \bibinfo{pages}{1--24}.
\newblock


\bibitem[\protect\citeauthoryear{Wilson}{Wilson}{1987}]%
        {Wil87}
\bibfield{author}{\bibinfo{person}{R. Wilson}.}
  \bibinfo{year}{1987}\natexlab{}.
\newblock \showarticletitle{Game-theoretic analyses of trading processes}.
\newblock  (\bibinfo{year}{1987}), \bibinfo{pages}{33--70}.
\newblock


\end{thebibliography}

\newpage
\appendix
\appendix
	\begin{center}
	\textbf{\textsc{APPENDIX}}
	\end{center}
	\vspace{1cm}
\newcommand{\partialDerVValue}{\left.\frac{\partial }{\partial t}\vValue_{B_t}(b_t)\right|_{t=0}}

\section{Directional derivatives and the Myerson auction}
\subsection{Directional derivatives}
Suppose $\shadingFunc$ is an increasing and differentiable function, $X$ is a random variable with positive density on its support and $B=\shadingFunc(X)$. We call $\vValue_B$ the virtual value function of $B$. $F_Z$ is a cumulative distribution function of the form $F_Z(x)=\prod_{i=2}^K F_{V_i}(x) \indicator{x\geq 0}\triangleq \Gamma_K(x) \indicator{x\geq 0}$. We assume that $F_{V_i}$'s are differentiable (i.e. $V_i$'s have a density) and therefore so is $\Gamma_K(x)$. We further assume that $\Gamma_K$ is differentiable and call its derivative $\gamma_K$ (or $f_Z$ when its argument is positive). 

We will also assume below that $\shadingFunc$ is such that $\vValue_B$ is increasing; see Lemmas \ref{lemma:keyODEs} and \ref{lemma:makingSureGIncreasing} to see how this requirement can be enforced.  

We are interested in 
$$
\Pi(\shadingFunc)=\Exp{(X-\vValue_B(B))F_Z(\vValue_B(B))}\;
$$ 
and its directional derivative. 
\begin{lemma}\label{lemma:directionalDerivativeMyerson}
Suppose that $\shadingFunc$ has the properties mentioned above and is such that $\vValue_B$ is increasing.

Call $x_\shadingFunc$ the point such that $\vValue_B(\shadingFunc(x_\shadingFunc))=0$. 

Let us call $\shadingFunc_t=\shadingFunc+t \rho$ where $\rho$ is another function differentiable function. $\shadingFunc_t$ is also assumed to be increasing and differentiable, at least for $t$ infinitesimally small. 

Then, the directional derivative of $\Pi(\shadingFunc)$ in the direction of $\rho$ is 
\begin{align*}
\frac{\partial }{\partial \shadingFunc}\Pi(\shadingFunc) &=
\Exp{\left.\frac{\partial }{\partial t}\vValue_{B_t}(B_t)\right|_{t=0}\left\{[X-\vValue_B(B)] f_Z(\vValue_B(B))-F_Z(\vValue_B(B))\right\}\indicator{\psi_B(B)>0}}\\
&+ \left.\frac{\partial }{\partial t}\vValue_{B_t}(b_t)\right|_{t=0,\value_B(b)=0}\prod_{i=2}^K F_{V_i}(0) f(x_\shadingFunc) x_\shadingFunc
\end{align*}

\end{lemma}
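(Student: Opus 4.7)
The plan is to express $\Pi(\shadingFunc_t)$ as a Riemann integral in $x$ with a moving lower endpoint, and then apply Leibniz's rule so that the interior contribution produces the bulk expectation while the moving endpoint produces the extra boundary term coming from the jump of $F_Z$ at $0$.

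First, I would substitute $B_t = \shadingFunc_t(X)$ to rewrite $\Pi(\shadingFunc_t) = \int (x - \vValue_{B_t}(\shadingFunc_t(x))) F_Z(\vValue_{B_t}(\shadingFunc_t(x))) f(x) \, dx$, and decompose $F_Z(y) = \Gamma_K(y) \indicator{y \geq 0}$ with $\Gamma_K$ differentiable of derivative $\gamma_K = f_Z$ on $(0,\infty)$. Since $\vValue_{B_t}\circ \shadingFunc_t$ is assumed increasing in $x$, there is a unique $x_{\shadingFunc_t}$ with $\vValue_{B_t}(\shadingFunc_t(x_{\shadingFunc_t})) = 0$, and the integrand vanishes for $x < x_{\shadingFunc_t}$. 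Thus
$$
\Pi(\shadingFunc_t) = \int_{x_{\shadingFunc_t}}^{\infty} \bigl(x - \vValue_{B_t}(\shadingFunc_t(x))\bigr)\,\Gamma_K(\vValue_{B_t}(\shadingFunc_t(x)))\, f(x) \, dx,
$$
a standard parametric integral with a moving lower limit.

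Next, I would differentiate this expression by Leibniz's rule at $t = 0$. The interior contribution comes from differentiating the integrand; applying product and chain rules yields
$$
\int_{x_\shadingFunc}^\infty \partialDerVValue \, \bigl[(x - \vValue_B(\shadingFunc(x)))\,\gamma_K(\vValue_B(\shadingFunc(x))) - \Gamma_K(\vValue_B(\shadingFunc(x)))\bigr] f(x) \, dx,
$$
which, extending the domain to the full line with the indicator $\indicator{\vValue_B(B) > 0}$ and identifying $\gamma_K = f_Z$, $\Gamma_K = F_Z$ on $(0,\infty)$, is exactly the expectation in the first term of the statement. The boundary contribution from the moving lower endpoint equals $-(dx_{\shadingFunc_t}/dt)|_{t=0}$ times the integrand evaluated at $x^* := x_\shadingFunc$. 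At that point $\vValue_B(\shadingFunc(x^*)) = 0$, so the integrand reduces to $x^* \, \Gamma_K(0) \, f(x^*) = x^* \prod_{i=2}^K F_{V_i}(0)\, f(x^*)$. To identify $dx_{\shadingFunc_t}/dt$, I would differentiate the defining identity $\vValue_{B_t}(\shadingFunc_t(x_{\shadingFunc_t})) = 0$ implicitly in $t$ and solve; this yields the factor $\left.\partial_t \vValue_{B_t}(b_t)\right|_{t=0,\,\vValue_B(b)=0}$ of the stated second term.

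The main obstacle is the jump discontinuity of $F_Z$ at $0$: it is precisely this jump which turns a naive chain rule computation into one with an honest extra boundary term, and whose magnitude $\Gamma_K(0)$ appears as the coefficient in the second term. Equivalently, viewing $F_Z$ in the sense of distributions, its derivative contains an atom $\Gamma_K(0)\,\delta_0$ alongside the smooth part $\gamma_K \indicator{y > 0}$; the atom produces the contribution at $x^*$ while the smooth part feeds the bulk expectation. The remaining technicalities --- dominated convergence to swap $\partial_t$ with the integral, and ensuring $\shadingFunc_t = \shadingFunc + t\rho$ stays increasing and differentiable for $t$ infinitesimally small --- are routine under the regularity assumptions placed on $\shadingFunc$, $\rho$, and the densities.
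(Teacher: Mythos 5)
Your proposal is correct and follows essentially the same route as the paper: the paper differentiates under the integral sign treating $F_Z'$ as a distribution whose atom $\prod_{i=2}^K F_{V_i}(0)\,\delta_0$ at the origin produces the boundary term, which is precisely the computation you describe in your ``equivalently'' remark, and your primary packaging via Leibniz's rule with the moving endpoint $x_{\shadingFunc_t}$ is the same calculation in different clothing. The one caveat, shared with the paper's own derivation, is that a literal implicit differentiation of $\vValue_{B_t}(\shadingFunc_t(x_{\shadingFunc_t}))=0$ gives $dx_{\shadingFunc_t}/dt=-\left.\tfrac{\partial}{\partial t}\vValue_{B_t}(b_t)\right|_{t=0}\big/(\vValue_B\circ\shadingFunc)'(x_\shadingFunc)$, so the boundary term strictly carries a Jacobian factor $1/(\vValue_B\circ\shadingFunc)'(x_\shadingFunc)$ that neither your sketch nor the stated formula makes explicit.
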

\begin{proof}
We call $\shadingFunc_t=\shadingFunc+t \rho$ where $\rho$ is another function. We assume that $\shadingFunc_t$ is also increasing and differentiable. Let us call $b_t=\shadingFunc_t(x)$ and $B_t=\shadingFunc_t(X)$. Note that using Equation \eqref{eq:ODEPhiG}, we have 
$$
\frac{\vValue_{B_t}(b_t)-\vValue_{B}(b)}{t}=\rho(x)+\rho'(x)(\vValue_X(x)-x)\;,
$$
Hence, as a function of $\shadingFunc$, $\vValue_{B}(b)$ admits a well defined directional derivative, which we call $\left.\frac{\partial }{\partial t}\vValue_{B_t}(b_t)\right|_{t=0}$. Furthermore it is very easy to express in terms of $\rho$, $\vValue_X$ and $x$.  To compute the directional derivative of $\Pi$, we can therefore essentially differentiate under the expectation sign. 

We note that $F_Z$ is differentiable in the sense of distributions and we have, if $\delta_0$ denotes a Dirac point mass at 0,  
$$
F'_Z(x)=\left\{
\begin{array}{ll}
\gamma_K(x) &\text{ if } x>0\;,\\
\delta_0 \prod_{i=2}^K F_{V_i}(0) & \text{ if } x=0\;,\\
0 & \text{ if } x<0\;.
\end{array}
\right.
$$
If we rewrite $\Pi(\shadingFunc)$ as an integral we have 
$$
\Pi(\shadingFunc)=\int f_X(x) (x-\vValue_B(\shadingFunc(x))) F_Z(\vValue_B(\shadingFunc(x))) dx\;.
$$
Calling $x_\shadingFunc$ the point such that $\vValue_B(\shadingFunc(x_\shadingFunc))=0$, we have after taking the directional derivative under the integral 
\begin{align*}
\frac{\partial }{\partial \shadingFunc}\Pi(\shadingFunc) &=
\int  \left.\frac{\partial }{\partial t}\vValue_{B_t}(b_t(x))\right|_{t=0} f_X(x) \left\{[x-\vValue_B(\shadingFunc(x))] \gamma_K(\vValue_B(\shadingFunc(x)))-F_Z(\vValue_B(\shadingFunc(x))\}\right) dx\\
&+ \left.\frac{\partial }{\partial t}\vValue_{B_t}(b_t)\right|_{t=0,\value_B(b)=0}\prod_{i=2}^K F_{V_i}(0) f(x_\shadingFunc) x_\shadingFunc\;.
\end{align*}
Reinterpreting the first integral as an expectation concludes the proof. 
\end{proof}

\subsection{Proofs of ODE lemmas}\label{app:subsec:proofODEs}

We start by proving Lemma \ref{lemma:keyODEs} and turn to Lemma \ref{lemma:makingSureGIncreasing} afterwards.
\begin{proof}[Proof of Lemma \ref{lemma:keyODEs}]
If $B=\genFunc(X_1)$, $F_B(x)=P(B\leq x)=P(\genFunc(X_1)\leq x)=F_1(\genFunc^{-1}(x))$, since $\genFunc$ is increasing. Hence, for the density of $B$, we have
$$
f_B(x)=f_1(\genFunc^{-1}(x))\frac{1}{\genFunc'(\genFunc^{-1}(x))}\;.
$$
Therefore, 
\begin{align*}
\vValue_B(x)&=x-\frac{1-F_B(x)}{f_B(x)}
=x-\genFunc'(\genFunc^{-1}(x))\frac{1-F_1(\genFunc^{-1}(x))}{f_1(\genFunc^{-1}(x))}\\
&=x+\genFunc'(\genFunc^{-1}(x))\left[\vValue_1(\genFunc^{-1}(x))-\genFunc^{-1}(x)\right]\;,\text{ and }\\
\vValue_B(b)&=\vValue_B(\genFunc(x_1))=\genFunc(x_1)+\genFunc'(x_1)\left[\vValue_1(x_1)-x_1\right] \text{ as } b=\genFunc(x_1)\;.
\end{align*}
This proves Equation \eqref{eq:ODEPhiG}. 
\newcommand{\genFuncTwo}{\eta}
For the second part of the lemma, we recall that $\vValue_1(x_1)=(1-F_1(x_1))/f_1(x_1)$ (see \citet{krishna2009auction}, Chap. 5, p. 68). Hence, for any function $\genFuncTwo$, we have, when $f_1(x_1)>0$, 
\begin{align*}
\genFuncTwo(x_1)+\genFuncTwo'(x_1)\left[\vValue_1(x_1)-x_1\right]&=\genFuncTwo(x_1)-\genFuncTwo'(x_1)\frac{1-F_1(x_1)}{f_1(x_1)}\;
=\frac{\genFuncTwo(x_1) f_1(x_1)+ \genFuncTwo'(x_1) (F_1(x_1)-1)}{f_1(x_1)}\\
&=\frac{(\genFuncTwo(F_1-1))'(x_1)}{f_1(x_1)}
\end{align*}
It is now easy to verify by direct computation that if 
$$
\genFunc_h(x)=\frac{\genFunc_h(x_0)(1-F_1(x_0))-\int_{x_0}^{x} h(u) f_1(u) du}{1-F_1(x)}\;,
\text{ we have }
\genFunc_h(x)-\genFunc_h'(x)\frac{1-F_1(x)}{f_1(x)}=h(x)\;.
$$
\end{proof}
We now turn to proving Lemma \ref{lemma:makingSureGIncreasing}.
\begin{proof}[Proof of \ref{lemma:makingSureGIncreasing}]
In light of Lemma \ref{lemma:keyODEs}, the only question we have to settle is whether $\genFunc$ is increasing and differentiable. If that is the case Equation \eqref{eq:ODEPhiG} applies and then Equation \eqref{eq:SolnPhiBEqualsh} can be interpreted as stating that $\vValue_B(b)=h(x_1)$. 

$\genFunc$ is clearly differentiable and after differentiating it, it is clear that showing that $\genFunc$ is increasing amounts to showing that for all $x\in (\mathfrak{l},\mathfrak{u})$, 
$$
\frac{\int_x^{\mathfrak{u}}h(t) f_1(t) dt }{1-F_1(x)}=\Exp{h(X_1)|X_1\geq x}>h(x)=\inf_{t\in [x,\mathfrak{u})} h(t)\;.
$$
The last equality is true because $h$ is assumed to be increasing. 

For the sake of completeness, we prove by elementary means this trivial inequality. Under our assumptions on $h$, we can find $\delta>0$ and $x\leq x_\delta<\mathfrak{u}$ such that if $y\geq x_\delta$, $h(y)\geq h(x)+\delta$. (For instance, if $\mathfrak{u}$ is finite, take $x_\delta=(x+\mathfrak{u})/2$ and the corresponding $\delta$. If $\mathfrak{u}$ is infinite, take $x_\delta=2x$.) Hence, $\int_{x}^\mathfrak{u} h(u) f_1(u) du\geq h(x) (F_1(x_\delta)-F_1(x))+(h(x)+\delta) (1-F_1(x_\delta))=h(x) (1-F_1(x))+\delta(1-F_1(x_\delta))$. Since $F_1(x_\delta)<1$ because of the definition of $x_\delta$ and $f_1>0$, the result is shown. 

This shows that $\genFunc$ is increasing and the Lemma is shown. 
\end{proof}

\section{Generalized Pareto distributions: reminders and details about boosted second price auctions}\label{sec:genParetoAppendix}

\newcommand{\GP}{\text{GP}}
\subsection{Reminders: definitions and basic results}\label{subsec:remindersGP}
If $X$ is Generalized Pareto (GP) with parameters $(\mu,\sigma>0,\xi)$, we have, when $\xi<0$, 
$$
P(X\geq t)=(1+\xi (t-\mu)/\sigma)^{-1/\xi}\;.
$$
and otherwise $P(X\geq t)=\exp(-(t-\mu)/\sigma)$ if $\xi=0$. The support is $[\mu,\mu-\sigma/\xi]$. 

2-parameter GP families are also often considered; in this case $\mu=0$ and we denote the corresponding random variable as $GP(\sigma,\xi)$. For the GP$(\sigma,\xi)$, we have the following simple results:
\begin{itemize}
\item $$\Exp{X}=\frac{\sigma}{1-\xi}\;.$$
\item The virtual value is 
$$
\vValue(x)=(1-\xi)\left(x-\Exp{X}\right)\;.
$$
\item In particular, the monopoly price $\vValue^{-1}(0)$, which we denote by $r^*$ is such that 
$$
r^*=\Exp{X}\;.
$$
\item We have the stochastic representation 
$$
X=\frac{\sigma}{\xi}(U^{-\xi}-1)\;, \text{ where } U\sim \Unif[0,1]\;.
$$
\end{itemize}


For GP($\mu,\sigma,\xi$), the monopoly price is of the form $r^*=\sigma/(1-\xi)-\xi\mu/(1-\xi)$ and we have
$$
\vValue(x)=(1-\xi)(x-r^*)=(1-\xi)(x-\Exp{X})-\mu\;.
$$
Of course, stochastically, if $X_\mu\sim GP(\mu,\sigma,\xi)$ and $X_0\sim GP(\sigma,\xi)$, 
$$
X_\mu=\mu+X_0\;.
$$

\subsection{An integral involving GP distributions}\label{app:tediousCompGPs}
Suppose $F_1(x)=1-(1+\xi x/\sigma)^{-1/\xi}$, i.e. $X$ is $GP(\sigma,\xi)$. For its density we have $f_1(x)=\frac{1}{\sigma}(1+\xi x/\sigma)^{-1/\xi-1}$ on $[0,-\sigma/\xi]$. 
A tedious computation shows that 
$$
\int_x^{-\sigma/\xi} u f_1(u) du =\frac{\sigma}{\xi}\left[\frac{1}{1-\xi}\left(1+\frac{\xi x}{\sigma}\right)-1\right](1-F_1(x))\;.
$$
The result is easy to verify by differentiation.

The conclusion is that 
\begin{equation}\label{eq:meanGPs}
r(x)=\frac{\int_x^{-\sigma/\xi} u f_1(u) du }{1-F_1(x)}=\Exp{X|X\geq x}=\frac{x+\sigma}{1-\xi}\;.
\end{equation}
Note that, as is obvious from the integral definition,  $$
r(x)\geq x\;,
$$
because 
$$
r(x)=\frac{x+\sigma}{1-\xi}=x+\frac{\sigma}{1-\xi}(1+\xi x/\sigma)\geq x\;, \text{ since } x\leq -\frac{\xi}{\sigma}\;. 
$$
\subsection{Boosted second price auctions: gradient computations}
\label{subsec:bspaGradComps}


If bidder 1 has value $X_1$, a random variable whose cumulative distribution function is $F_1$, which we assume to be a continuous function. Then $U_1=1-F_1(X_1)$ has a uniform distribution. Let us call 
\begin{align*}
\mathfrak{p}&=(\mu,\sigma,\xi)\;, \text{ and } \\
X_{\mathfrak{p}}&=\frac{\sigma}{\xi}\left[(1-F_1(X_1))^{-\xi}-1\right]+\mu\;, \;
x_{\mathfrak{p}}=\frac{\sigma}{\xi}\left[(1-F_1(x_1))^{-\xi}-1\right]+\mu\;.
\end{align*}
It is clear that $X_{\mathfrak{p}}$ has GP($\mu,\sigma,\xi$) distribution; furthermore $x_{\mathfrak{p}}$ is an increasing function of $x_1$. Bidder 1's payoff in a boosted second price auction is, with the same notations we had above, 
$$
\Pi(\mathfrak{p})=\Exp{(X_1-\vValue_\mathfrak{p}(X_\mathfrak{p}))F_Z(\vValue_\mathfrak{p}(X_\mathfrak{p}))}
\text{ and she seeks }
\mathfrak{p}^*=\argmax_{\mathfrak{p}}\Pi(\mathfrak{p})\;.
$$
Here we focus on gradient computations for use in numerical optimization. 
If $u_1=(1-F_1(x_1))$, it is easy to verify that 
$$
\vValue_{\mathfrak{p}}(x_{\mathfrak{p}})=(1-\xi)x_{\mathfrak{p}}-\sigma +\xi \mu =\frac{1-\xi}{\xi}\sigma u_1^{-\xi}-\frac{\sigma}{\xi}+\mu\;.
$$
In particular,  
$$
\nabla_{\mathfrak{p}}\vValue_{\mathfrak{p}}(x_{\mathfrak{p}})=
\begin{bmatrix}
1  \\
\frac{1-\xi}{\xi}u_1^{-\xi} -\frac{1}{\xi}\\
\frac{\sigma}{\xi^2}[1-u_1^{-\xi}+(1-\xi) \ln(u_1^{-\xi}) u_1^{-\xi}]
\end{bmatrix}\;.
$$
\newcommand{\rvValueBSPA}{\vValue_{\mathfrak{p}}(X_{\mathfrak{p}})}
\newcommand{\dvValueBSPA}{\vValue_{\mathfrak{p}}(x_{\mathfrak{p}})}
It is also the case that when $\vValue_{\mathfrak{p}}(x_{\mathfrak{p}})=0$, $u_1^{-\xi}=(1-\mu\xi/\sigma)/(1-\xi)$. We conclude, using computations similar to those leading to Equation \eqref{eq:directionalDerivativePayoff} that 
\begin{align*}
\nabla_{\mathfrak{p}}\Pi(\mathfrak{p})&=
\Exp{\nabla_{\mathfrak{p}}\rvValueBSPA[(X_1-\rvValueBSPA)f_Z(\rvValueBSPA)-F_Z(\rvValueBSPA)]\indicator{\rvValueBSPA>0}}\\
&+\begin{bmatrix}
1\\
-\frac{\mu}{\sigma}\\
\frac{\sigma}{\xi^2}\left\{\frac{\xi}{1-\xi} (\frac{\mu}{\sigma}-1)+(1-\frac{\mu\xi}{\sigma})\ln[u_1^{-\xi}]\right\}
\end{bmatrix}
\prod_{i=2}^K F_{V_i}(0) f_1(x_1) x_1\;,
\end{align*}
where $u_1=1-F_1(x_1)$ and $u_1^{-\xi}=(1-\mu\xi/\sigma)/(1-\xi)$. This result could be used to implement first order optimization methods for optimization in the setting of boosted second price auctions. 

In case $F_Z$ is known, and as could be probably surmised in boosted second price auctions corresponds to GP bids, Lemma \ref{lemma:variousGPComps} would prove useful. In particular, in this situation, a reasonable first approach would be to neglect the second term of the previous equation and find $\mathfrak{p}$ such that $(x_1-\dvValueBSPA)f_Z(\dvValueBSPA)-F_Z(\dvValueBSPA)]\indicator{\dvValueBSPA>0}=0$. This of course amounts to solving a differential equation of the same kind we have solved several times in this paper and so we do not pursue this avenue of research in detail in this paper.

\end{document}